\newif\ifcode
\newtheorem{theorem}{Theorem}
\newtheorem{claim}[theorem]{Claim}
\newtheorem{definition}{Definition}
\newtheorem{lemma}[theorem]{Lemma}
\newtheorem{observation}[theorem]{Observation}
\newcounter{linenumber}
\newcommand{\true}{\mathit{true}}
\newcommand{\false}{\mathit{false}}
\newcommand{\remove}[1]{}
\newcommand{\Wset}{\textit{Wset}}
\newcommand{\Rset}{\textit{Rset}}
\newcommand{\Dset}{\textit{Dset}}
\newcommand{\txns}{\textit{txns}}
\newcommand{\Read}{\textit{read}}
\newcommand{\Write}{\textit{write}}
\newcommand{\TryC}{\textit{tryC}}
\newcommand{\ignore}[1]{}
\begin{document}
\bibliographystyle{abbrv}

\title{Progressive Transactional Memory in Time and Space}

\author{
Petr Kuznetsov$^1$~~~Srivatsan Ravi$^2$ \\
$^1$\normalsize T\'el\'ecom ParisTech \\
$^2$\normalsize TU Berlin
}

\ignore{
\author{
Petr Kuznetsov\inst{1} \and
Sathya Peri\inst{2}
}

\institute{
T\'el\'ecom ParisTech,  petr.kuznetsov@telecom-paristech.fr \and
IIT Patna, sathya@iitp.ac.in
}

\remove{
\authorinfo{Petr Kuznetsov}
           {Telekom Innovation Lab, TU Berlin, petr.kuznetsov@tu-berlin.de}
\authorinfo{Sathya Peri}
           {IIT Patna, sathya@iitp.ac.in}
}
}

\maketitle
\begin{abstract}
Transactional memory (TM) allows concurrent processes to
organize sequences of operations on shared \emph{data items} into atomic
transactions. 
A transaction may commit, in which case it appears to have executed sequentially or it may \emph{abort}, in which case no
data item is updated.

The TM programming paradigm emerged as an alternative to conventional
fine-grained locking techniques, offering 
ease of programming and compositionality.   
Though typically themselves implemented using locks, TMs
hide the inherent issues of lock-based synchronization behind a nice
transactional programming interface.

In this paper, we explore inherent time and space complexity of lock-based TMs,
with a focus of the most popular class of \emph{progressive} lock-based
TMs. 
We derive that a progressive TM 
might enforce a read-only transaction to perform a quadratic (in the number
of the data items it reads) number of steps and access a linear number
of distinct memory locations, closing the question of
inherent cost of \emph{read validation} in TMs.
We then show that the total number of \emph{remote memory references} (RMRs) 
that take place in an execution of a progressive TM in which $n$
concurrent processes perform transactions on a single data item might
reach $\Omega(n \log n)$, which appears to be the first RMR complexity
lower bound for transactional memory.
\end{abstract}

%
\section{Introduction}
\label{sec:intro}
Transactional memory (TM) allows concurrent processes to
organize sequences of operations on shared \emph{data items} into atomic
transactions. 
A transaction may \emph{commit}, in which case it appears to have executed sequentially or it may \emph{abort}, in which case no
data item is updated.
The user can therefore design software having only sequential
semantics in mind and let the TM take care of handling
\emph{conflicts} (concurrent reading and writing to the same data item) 
resulting from concurrent executions.
Another benefit of transactional memory over conventional lock-based concurrent
programming is \emph{compositionality}: it allows the programmer to
easily compose multiple operations on multiple objects into atomic
units, which is very hard to achieve using locks directly.      
Therefore, while still typically \emph{implemented} using locks, 
TMs hide the inherent issues of lock-based programming behind
an easy-to-use and compositional transactional interface. 
%

At a high level, a TM implementation must ensure that 
transactions are \emph{consistent} with some sequential execution.
A natural consistency criterion is 
\emph{strict serializability}~\cite{Pap79-serial}: 
all committed transactions appear to execute sequentially in some
total order respecting the timing of non-overlapping transactions.
The stronger criterion of \emph{opacity}~\cite{tm-book}, 
guarantees that \emph{every} transaction (including aborted and
incomplete ones) observes a view that is consistent with the
\emph{same} sequential execution, which implies that no transaction
would expose a pathological behavior, not predicted by the sequential
program, such as division-by-zero or
infinite loop. 

Notice that a TM implementation in which every transaction is aborted is trivially opaque, but not very useful.
Hence, the TM must satisfy some \emph{progress} guarantee specifying the conditions under
which a transaction is allowed to abort.
It is typically expected that a transaction aborts only because of
\emph{data conflicts} with a concurrent one, e.g., when they
are both trying to access the same data item and at least one of the
transactions is trying to update it. 
This progress guarantee, captured formally by the criterion of
\emph{progressiveness}~\cite{GK09-progressiveness}, is satisfied by
most TM implementations today~\cite{HLM+03,DSS06,norec}. 

There are two design principles which state-of-the-art TM~\cite{DSS06,norec,nztm,fraser,HLM+03, DStransaction06} 
implementations adhere to: \emph{read invisibility}~\cite{DS10-rwlock,AH13-inv} and
\emph{disjoint-access parallelism}~\cite{israeli-disjoint,AHM09}. Both are assumed to decrease the
chances of a transaction to encounter a data conflict and, thus,
improve performance of progressive TMs.   
Intuitively, reads performed by a TM are invisible if they do not
modify the shared memory used by the TM implementation and, thus, do
not affect other transactions.   
A disjoint-access parallel (DAP) TM ensures that transaction accessing
disjoint data sets do not contend on the shared memory and, thus, may
proceed independently. 
As was earlier observed~\cite{tm-book}, the combination of these principles incurs
some inherent costs, and the main motivation of this paper is to
explore these costs.    

\ignore{
 is to allow transactions that do not contend on the
same data item to proceed independently of each other without memory contention. 

One common property that is satisfied by most state-of-the-art progressive TM implementations is \emph{invisible reads},
which intuitively means that the TM implementation is unaware of the number of readers accessing a given data item.
TM implementations like \emph{TL2}~\cite{DSS06} and \emph{NOrec}~\cite{norec} that satisfy 
invisible reads ensure that readers do not apply \emph{nontrivial} primitives that update the shared memory.
A weaker definition of invisible reads, that we adopt in this paper, ensures that
read operations invoked by some transaction $T$ cannot apply nontrivial primitives only in executions in which
$T$ is not concurrent with
any other transaction.
In fact, popular TM implementations like \emph{DSTM}~\cite{HLM+03} allow read operations to apply nontrivial primitives
in concurrent executions, but not in sequential ones, thus satisfying \emph{weak} invisible reads.
Most TM practitioners find the need to emply some variant of read invisibility
since the cost of employing ``visible'' readers appears to be much higher.

To boost performance, it is considered important that the TM is \emph{disjoint-access-parallel} (DAP)~\cite{israeli-disjoint,AHM09}.
The idea of DAP is to allow transactions that do not contend on the
same data item to proceed independently of each other without memory contention. 
Perhaps, the weakest definition of DAP that is satisfied by several popular 
TM implementations~\cite{nztm,fraser,HLM+03, DStransaction06}
is \emph{weak DAP}~\cite{AHM09}.
This property 
ensures that two transactions
concurrently contend on the same base object 
only if their data 
sets are connected in the \emph{conflict graph}, capturing 
data-set overlaps among all concurrent transactions.
}

Intuitively, the overhead invisible read may incur comes from the need
of \emph{validation}, \emph{i.e.}, ensuring that read data items
have not been updated when the transaction completes.
Our first result (Section~\ref{sec:invlb}) is that a read-only transaction in an opaque TM featured with \emph{weak} DAP
and \emph{weak} invisible reads must \emph{incrementally} validate
every next read operation. This results in a quadratic  (in the size of the transaction's read
set) step-complexity lower bound.
Informally, weak DAP means that two transactions
encounter a memory race only if their data 
sets are connected in the \emph{conflict graph}, capturing 
data-set overlaps among all concurrent transactions.
Weak read invisibility allows read operations of a transaction
$T$ to be ``visible'' only if $T$ is concurrent with
another transaction.
The lower bound is derived for \emph{minimal} progressiveness, 
where transactions are guaranteed to commit only if they run sequentially. 
Our result improves the lower bound~\cite{GK09-progressiveness,tm-book} derived for \emph{strict-data
partitioning} (a very strong version of DAP) and (strong) invisible reads.  

\ignore{
This result makes the assumption
that the shared memory is \emph{strict data partitioned}, \emph{i.e.}, the set of base objects used by the TM is split into 
disjoint sets, each storing information only about a single data item.
In this paper, we improve on this result. We first prove a quadratic (in the size of the transaction's read set) step-complexity lower bound opaque TMs that provide
sequential TM-progress while assuming weak invisible reads and a weaker (than strict data-partitioning) 
variant of DAP.
}
Our second result is that, under weak DAP and weak read invisibility,
a strictly serializable TM must have a read-only transaction that accesses a 
linear (in the size of the transaction's read set) number of distinct
memory locations in the course of performing its last read operation. 
Naturally, this space lower bound also applies to opaque TMs.

We then turn our focus to \emph{strongly progressive}
TMs~\cite{tm-book} that, in addition to progressiveness, 
ensures that \emph{not all} concurrent transactions conflicting over a single
data item abort.   
In Section~\ref{sec:rmr}, we prove that in any strongly progressive strictly serializable TM
implementation that accesses the shared memory with \emph{read}, \emph{write} and \emph{conditional}
primitives, such as \emph{compare-and-swap} and
\emph{load-linked/store-conditional}, 
the total number of \emph{remote memory references} (RMRs) 
that take place in an execution of a progressive TM in which $n$
concurrent processes perform transactions on a single data item might
reach $\Omega(n \log n)$. 
The result is obtained via a reduction to an analogous lower bound for
mutual exclusion~\cite{rmr-mutex}.
In the reduction, we show that any TM with the above properties can be
used to implement a \emph{deadlock-free} mutual exclusion, employing
transactional operations on only one data item and incurring a constant RMR overhead. 
The lower bound applies to RMRs in both the \emph{cache-coherent (CC)} and
\emph{distributed shared memory (DSM)} models, and it appears to be the first RMR complexity
lower bound for transactional memory.

\ignore{
has an execution
that incurs $\Omega(n\log n)$ ($n$ is the total number of processes) \emph{remote memory references (RMR)}~\cite{anderson-90-tpds}
even if it accesses just a single data item.
The lower bound applies to both the \emph{cache-coherent (CC)} and \emph{distributed shared memory (DSM)} models.
We prove this result by reduction to \emph{mutual exclusion}~\cite{rmr-mutex}. 
The mutual exclusion task involves sharing a critical resource among processes such that at most one process has access to
the resource at any prefix of the execution.
In \cite{rmr-mutex}, it established that any \emph{deadlock-free} (if every process takes infinitely many steps,
some process eventually accesses the resource) mutual exclusion implementation
from read, write and conditional primitives has an execution whose RMR cost is $\Omega(n\log n)$.
We show that, any strictly serializable
strongly progressive TM implementation $M$ implies a \emph{deadlock-free} mutual exclusion implementation $L(M)$ such that the RMR complexity of $M$
is within a constant factor of the RMR complexity of $L(M)$, thus proving the lower bound.

\vspace{2mm}\noindent\textbf{Roadmap.}
Sections~\ref{sec:model} and \ref{sec:prel} define our model and the classes of TMs 
considered in this paper.
Section~\ref{sec:invlb} contains our step-complexity lower bound and 
Section~\ref{sec:rmr} describes our lower bound on RMR complexity for strongly progressive TMs.
In Section~\ref{sec:related}, we discuss the related
work and in Section~\ref{sec:disc},  concluding remarks.
}

%
\section{Model}
\label{sec:model}
\vspace{1mm}\noindent\textbf{TM interface.}
A \emph{transactional memory} (in short, \emph{TM})
supports \emph{transactions} for
reading and writing on a finite set of data items,
referred to as \emph{t-objects}.
Every transaction $T_k$ has a unique identifier $k$. We assume no bound on the size of a t-object, \emph{i.e.}, 
the cardinality on the set $V$ of possible different values a t-object can have.
A transaction $T_k$ may contain the following \emph{t-operations},
each being a matching pair of an \emph{invocation} and a \emph{response}:
$\Read_k(X)$ returns a value in some domain $V$ (denoted $\Read_k(X) \rightarrow v$)
or a special value $A_k\notin V$ (\emph{abort});
$\Write_k(X,v)$, for a value $v \in V$,
returns \textit{ok} or $A_k$;
$\TryC_k$ returns $C_k\notin V$ (\emph{commit}) or $A_k$.

\vspace{1mm}\noindent\textbf{Implementations.}
We assume an asynchronous shared-memory system in which a set of $n>1$ processes $p_1,\ldots , p_n$ communicate by
applying \emph{operations} on shared \emph{objects}.
An object is an instance of an \emph{abstract data type} which specifies a set of operations that provide the only means to
manipulate the object.
An \emph{implementation} of an object type $\tau$ provides a specific data-representation of $\tau$ by 
applying \emph{primitives} on
shared \emph{base objects}, each of which is assigned an initial value and a set of algorithms $I_1(\tau),\ldots , I_n(\tau)$,
one for each process.
We assume that these primitives are \emph{deterministic}.
Specifically, a TM \emph{implementation} provides processes with algorithms
for implementing $\Read_k$, $\Write_k$ and $\TryC_k()$
of a transaction $T_k$ by \emph{applying} \emph{primitives} from a set of shared \emph{base objects}.
We assume that processes issue transactions sequentially, \emph{i.e.}, a process starts a new transaction only after the previous
transaction is committed or aborted.
A primitive is a generic \emph{read-modify-write} (\emph{RMW}) procedure applied to a base object~\cite{G05,Her91}.
It is characterized by a pair of functions $\langle g,h \rangle$:
given the current state of the base object, $g$ is an \emph{update function} that
computes its state after the primitive is applied, while $h$ 
is a \emph{response function} that specifies the outcome of the primitive returned to the process.
A RMW primitive is \emph{trivial} if it never changes the value of the base object to which it is applied.
Otherwise, it is \emph{nontrivial}.
An RMW primitive $\langle g,h \rangle$ is \emph{conditional} if there exists $v$, $w$ such that
$g(v,w)=v$ and there exists $v$, $w$ such that
$g(v,w)\neq v$~\cite{cond-04}.
For \emph{e.g}, \emph{compare-and-swap (CAS)}
and \emph{load-linked/store-conditional (LL/SC} are nontrivial conditional RMW primitives
while \emph{fetch-and-add} is an example of a nontrivial RMW primitive that is not conditional.

\vspace{1mm}\noindent\textbf{Executions and configurations.}
An \emph{event} of a process $p_i$ (sometimes we say \emph{step} of $p_i$)
is an invocation or response of an operation performed by $p_i$ or a 
rmw primitive $\langle g,h \rangle$ applied by $p_i$ to a base object $b$
along with its response $r$ (we call it a \emph{rmw event} and write $(b, \langle g,h\rangle, r,i)$).
A \emph{configuration} specifies the value of each base object and 
the state of each process.
The \emph{initial configuration} is the configuration in which all 
base objects have their initial values and all processes are in their initial states.

An \emph{execution fragment} is a (finite or infinite) sequence of events.
An \emph{execution} of an implementation $I$ is an execution
fragment where, starting from the initial configuration, each event is
issued according to $I$ and each response of a rmw event $(b, \langle
g,h\rangle, r,i)$ matches the state of $b$ resulting from all
preceding events.
An execution $E\cdot E'$, denoting the concatenation of $E$ and $E'$,
is an \emph{extension} of $E$ and we say that $E'$ \emph{extends} $E$.

Let $E$ be an execution fragment.
For every transaction identifier $k$,
$E|k$ denotes the subsequence of $E$ restricted to events of
transaction $T_k$.
If $E|k$ is non-empty,
we say that $T_k$ \emph{participates} in $E$, else we say $E$ is \emph{$T_k$-free}.
Two executions $E$ and $E'$ are \emph{indistinguishable} to a set $\mathcal{T}$ of transactions, if
for each transaction $T_k \in \mathcal{T}$, $E|k=E'|k$.
A TM \emph{history} is the subsequence of an execution consisting of the invocation and 
response events of t-operations.

The \emph{read set} (resp., the \emph{write set}) of a transaction $T_k$ in an execution $E$,
denoted $\Rset(T_k)$ (and resp. $\Wset(T_k)$), is the set of t-objects on which $T_k$ invokes reads (and resp. writes) in $E$.
The \emph{data set} of $T_k$ is $\Dset(T_k)=\Rset(T_k)\cup\Wset(T_k)$.
A transaction is called \emph{read-only} if $\Wset(T_k)=\emptyset$; \emph{write-only} if $\Rset(T_k)=\emptyset$ and
\emph{updating} if $\Wset(T_k)\neq\emptyset$.
Note that, in our TM model, the data set of a transaction is not known apriori, \emph{i.e.}, at the start of the transaction
and it is identifiable only by the set of data items the transaction has invoked a read or write on in the given execution.

\vspace{1mm}\noindent\textbf{Transaction orders.}
Let $\txns(E)$ denote the set of transactions that participate in $E$.
An execution $E$ is \emph{sequential} if every invocation of
a t-operation is either the last event in the history $H$ exported by $E$ or
is immediately followed by a matching response.
We assume that executions are \emph{well-formed}:
no process invokes a new operation before
the previous operation returns.
Specifically, we assume that for all $T_k$, $E|k$ begins with the invocation of a t-operation, is
sequential and has no events after $A_k$ or $C_k$.
A transaction $T_k\in \txns(E)$ is \emph{complete in $E$} if
$E|k$ ends with a response event.
The execution $E$ is \emph{complete} if all transactions in $\txns(E)$
are complete in $E$.
A transaction $T_k\in \txns(E)$ is \emph{t-complete} if $E|k$
ends with $A_k$ or $C_k$; otherwise, $T_k$ is \emph{t-incomplete}.
$T_k$ is \emph{committed} (resp., \emph{aborted}) in $E$
if the last event of $T_k$ is $C_k$ (resp., $A_k$).
The execution $E$ is \emph{t-complete} if all transactions in
$\txns(E)$ are t-complete.

For transactions $\{T_k,T_m\} \in \txns(E)$, we say that $T_k$ \emph{precedes}
$T_m$ in the \emph{real-time order} of $E$, denoted $T_k\prec_E^{RT} T_m$,
if $T_k$ is t-complete in $E$ and
the last event of $T_k$ precedes the first event of $T_m$ in $E$.
If neither $T_k\prec_E^{RT} T_m$ nor $T_m\prec_E^{RT} T_k$,
then $T_k$ and $T_m$ are \emph{concurrent} in $E$.
An execution $E$ is \emph{t-sequential} if there are no concurrent
transactions in $E$.

\vspace{1mm}\noindent\textbf{Contention.}
We say that a configuration $C$ after an execution $E$ is \emph{quiescent} (and resp. \emph{t-quiescent})
if every transaction $T_k \in \ms{txns}(E)$ is complete (and resp. t-complete) in $C$.
If a transaction $T$ is incomplete in an execution $E$, it has exactly one \emph{enabled} event, 
which is the next event the transaction will perform according to the TM implementation.
Events $e$ and $e'$ of an execution $E$  \emph{contend} on a base
object $b$ if they are both events on $b$ in $E$ and at least 
one of them is nontrivial (the event is trivial (and resp. nontrivial) if it is the application of a trivial 
(and resp. nontrivial) primitive).
We say that a transaction $T$ is \emph{poised to apply an event $e$ after $E$} 
if $e$ is the next enabled event for $T$ in $E$.
We say that transactions $T$ and $T'$ \emph{concurrently contend on $b$ in $E$} 
if they are each poised to apply contending events on $b$ after $E$.

We say that an execution fragment $E$ is \emph{step contention-free for t-operation $op_k$} if the events of $E|op_k$ 
are contiguous in $E$.
We say that an execution fragment $E$ is \emph{step contention-free for $T_k$} if the events of $E|k$ are contiguous in $E$.
We say that $E$ is \emph{step contention-free} if $E$ is step contention-free for all transactions that participate in $E$.
\section{TM classes}
\label{sec:prel}
\vspace{1mm}\noindent\textbf{TM-correctness.}
We say that $\Read_k(X)$ is \emph{legal} in a t-sequential execution $E$ if it returns the
\emph{latest written value} of $X$, and $E$ is \emph{legal}
if every $\Read_k(X)$ in $H$ that does not return $A_k$ is legal in $E$.

%
A finite history $H$ is \emph{opaque} if there
is a legal t-complete t-sequential history $S$,
such that
(1) for any two transactions $T_k,T_m \in \txns(H)$,
if $T_k \prec_H^{RT} T_m$, then $T_k$ precedes $T_m$ in $S$, and
(2) $S$ is equivalent to a \emph{completion} of $H$.

A finite history $H$ is \emph{strictly serializable} if there
is a legal t-complete t-sequential history $S$,
such that
(1) for any two transactions $T_k,T_m \in \txns(H)$,
if $T_k \prec_H^{RT} T_m$, then $T_k$ precedes $T_m$ in $S$, and
(2) $S$ is equivalent to $\ms{cseq}(\bar H)$, where $\bar H$ is some
completion of $H$ and $\ms{cseq}(\bar H)$ is the subsequence of $\bar H$ reduced to
committed transactions in $\bar H$.

We refer to $S$ as an opaque (and resp. strictly serializable) \emph{serialization} of $H$.

\vspace{1mm}\noindent\textbf{TM-liveness.}
We say that a TM implementation $M$ provides \emph{interval-contention free (ICF)} TM-liveness
if for every finite execution $E$ of $M$ such that the configuration after $E$ is quiescent, 
and every transaction $T_k$ that applies the invocation of a t-operation $op_k$ immediately after $E$, 
the finite step contention-free extension for $op_k$ contains a matching response.

\vspace{1mm}\noindent\textbf{TM-progress.}
We say that a TM implementation provides \emph{sequential TM-progress} (also called \emph{minimal progressiveness}~\cite{tm-book})
if every transaction
running step contention-free from a t-quiescent configuration commits within a finite number of steps.

We say that transactions $T_i,T_j$ \emph{conflict} in an execution $E$ on a t-object $X$ if
$X\in\Dset(T_i)\cap\Dset(T_j)$,  and $X\in\Wset(T_i)\cup\Wset(T_j)$.

A TM implementation $M$ provides \emph{progressive} TM-progress (or \emph{progressiveness}) 
if for every execution $E$ of $M$ and every transaction $T_i \in \ms{txns}(E)$ that returns $A_i$ in $E$, 
there exists a transaction $T_k \in \ms{txns}(E)$ such that $T_k$ and $T_i$ are concurrent and conflict in $E$~\cite{tm-book}. 

Let $CObj_H(T_i)$ denote the set of t-objects over which transaction $T_i \in \ms{txns}(H)$ conflicts with any other 
transaction in history $H$,
\emph{i.e.}, $X \in CObj_H(T_i)$, \emph{iff} there exist transactions $T_i$ and $T_k$
that conflict on $X$ in $H$.
Let $ Q\subseteq \ms{txns}(H)$ and $CObj_H(Q)=\bigcup\limits_{ T_i \in Q} CObj_H(T_i)$.

Let $\ms{CTrans}(H)$ denote the set of non-empty subsets of $\ms{txns}(H)$ such that a set $Q$ is in $\ms{CTrans}(H)$ 
if no transaction in $Q$ conflicts with a transaction not in $Q$.
\begin{definition}
\label{def:sprog}
A TM implementation $M$ is \emph{strongly progressive} if $M$ is weakly progressive 
and for every history $H$ of $M$ and for every set $Q \in \ms{CTrans}(H)$ such that $|CObj_{H}(Q)| \leq 1$, 
some transaction in $Q$ is not aborted in $H$.
\end{definition}
\vspace{1mm}\noindent\textbf{Invisible reads.}
A TM implementation $M$ uses \emph{invisible reads} if for every execution $E$ of $M$ and for every read-only transaction
$T_k\in \ms{txns}(E)$, $E|k$ does not contain any nontrivial events. 

In this paper, we introduce a definition of \emph{weak} invisible reads.
For any execution $E$ and any t-operation $\pi_k$ invoked by some transaction $T_k\in \ms{txns}(E)$,
let $E|\pi_k$ denote the subsequence of $E$ restricted to events of $\pi_k$ in $E$.

We say that a TM implementation $M$ satisfies \emph{weak invisible reads}
if for any execution $E$ of $M$ and every transaction $T_k\in \ms{txns}(E)$; $\Rset(T_k)\neq \emptyset$ that is
not concurrent with any transaction $T_m\in \ms{txns}(E)$, $E|\pi_k$ does not contain any nontrivial events, where $\pi_k$ is
any t-read operation invoked by $T_k$ in $E$.
%

\vspace{1mm}\noindent\textbf{Disjoint-access parallelism (DAP).}
%
Let $\tau_{E}(T_i,T_j)$ be the set of transactions ($T_i$ and $T_j$ included)
that are concurrent to at least one of $T_i$ and $T_j$ in $E$.
Let $G(T_i,T_j,E)$ be an undirected graph whose vertex set is $\bigcup\limits_{T \in \tau_{E}(T_i,T_j)} \Dset(T)$
and there is an edge
between t-objects $X$ and $Y$ \emph{iff} there exists $T \in \tau_{E}(T_i,T_j)$ such that 
$\{X,Y\} \in \Dset(T)$.
We say that $T_i$ and $T_j$ are \emph{disjoint-access} in $E$
if there is no path between a t-object in $\Dset(T_i)$ and a t-object in $\Dset(T_j)$ in $G(T_i,T_j,E)$.
A TM implementation $M$ is \emph{weak disjoint-access parallel (weak DAP)} if, for
all executions $E$ of $M$, 
transactions $T_i$ and $T_j$ 
concurrently contend on the same base object in $E$ only if   
$T_i$ and $T_j$ are not disjoint-access in $E$ or there exists a t-object $X \in \Dset(T_i) \cap \Dset(T_j)$~\cite{AHM09,PFK10}.
\begin{lemma}
\label{lm:dap}
(\cite{AHM09},\cite{TM-WF14})
Let $M$ be any weak DAP TM implementation.
Let $\alpha\cdot \rho_1 \cdot \rho_2$ be any execution of $M$ where
$\rho_1$ (and resp. $\rho_2$) is the step contention-free
execution fragment of transaction $T_1 \not\in \ms{txns}(\alpha)$ (and resp. $T_2 \not\in \ms{txns}(\alpha)$) 
and transactions $T_1$, $T_2$ are disjoint-access in $\alpha\cdot \rho_1 \cdot \rho_2$. 
Then, $T_1$ and $T_2$ do not contend on any base object in $\alpha\cdot \rho_1 \cdot \rho_2$.
\end{lemma}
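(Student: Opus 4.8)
The plan is to derive the claimed non-contention from the definition of weak DAP by showing that, in the execution $\alpha\cdot\rho_1\cdot\rho_2$, the transactions $T_1$ and $T_2$ satisfy neither of the two conditions under which weak DAP permits them to concurrently contend on a base object. The two conditions are: (i) $T_1$ and $T_2$ are not disjoint-access in the execution, or (ii) there is a t-object $X\in\Dset(T_1)\cap\Dset(T_2)$. The hypothesis explicitly rules out (i). So the bulk of the argument is to rule out (ii) — or more precisely, to argue that if they did contend, then they would be forced to share a data item, contradicting disjoint-access — and then to upgrade ``do not concurrently contend'' to ``do not contend at all.''

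**First I would** observe that since $\rho_1$ is step contention-free for $T_1$ and is the first fragment after $\alpha$ (and $T_1\notin\txns(\alpha)$), during $\rho_1$ transaction $T_1$ runs solo; similarly $T_2$ runs solo during $\rho_2$, which comes strictly after $\rho_1$. Hence the two fragments are temporally separated: every event of $T_1$ precedes every event of $T_2$ in the execution. The subtle point is that ``contend on a base object'' in the sense of the lemma is about two events on the same base object at least one of which is nontrivial, regardless of ordering, whereas the weak DAP definition is phrased in terms of \emph{concurrently} contending (both poised to apply contending events). So the key step is a reduction: if $T_1$ and $T_2$ contend on some base object $b$ in $\alpha\cdot\rho_1\cdot\rho_2$, I construct from it an execution in which they are \emph{poised} to contend on $b$. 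Concretely, let $e_1$ be the first event of $T_1$ on $b$ and $e_2$ the first event of $T_2$ on $b$ that together witness contention (at least one nontrivial). Truncate $\rho_1$ just before $e_1$ to get $\rho_1'$ and $\rho_2$ just before $e_2$ to get $\rho_2'$. Then in $\alpha\cdot\rho_1'\cdot\rho_2'$ both $T_1$ and $T_2$ are poised to apply $e_1$ and $e_2$ respectively — but one must check they are still poised to apply events on $b$, i.e. that the intervening steps of the other transaction did not change $T_1$'s enabled event.

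**The main obstacle** is exactly this last point: a priori, the steps of $T_2$ in $\rho_2'$ are invisible to $T_1$ (they happen after all of $T_1$'s steps), so $T_1$'s enabled event after $\alpha\cdot\rho_1'\cdot\rho_2'$ is the same as after $\alpha\cdot\rho_1'$, namely $e_1$; and symmetrically $T_2$'s enabled event after $\alpha\cdot\rho_1'\cdot\rho_2'$ is $e_2$, since $T_2$ runs step contention-free in $\rho_2$ and its steps depend only on the configuration reached after $\alpha\cdot\rho_1'$, which equals the configuration it would see. Care is needed to argue that the prefix $\alpha\cdot\rho_1'\cdot\rho_2'$ is itself a legal execution of $M$ (each rmw response matches the current base-object state) — this follows because truncation of a step contention-free fragment of a solo transaction yields a valid execution, and appending a step contention-free solo fragment of a fresh transaction to a quiescent-enough configuration is valid by the implementation's determinism. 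Once $T_1$ and $T_2$ are both poised to apply contending events on $b$ after $\alpha\cdot\rho_1'\cdot\rho_2'$, they concurrently contend on $b$ there; since $T_1$ and $T_2$ are disjoint-access in $\alpha\cdot\rho_1\cdot\rho_2$ they are also disjoint-access in the prefix $\alpha\cdot\rho_1'\cdot\rho_2'$ (the conflict graph of the prefix is a subgraph), and the hypothesis also gives $\Dset(T_1)\cap\Dset(T_2)=\emptyset$ in the prefix. Thus neither weak DAP clause is satisfied, contradicting weak DAP. Therefore $T_1$ and $T_2$ contend on no base object in $\alpha\cdot\rho_1\cdot\rho_2$, which is the claim. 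I would present this as: (1) temporal separation of $\rho_1,\rho_2$; (2) the truncation construction; (3) the poised-event lemma via invisibility/determinism; (4) disjoint-access and empty-intersection are inherited by the prefix; (5) invoke weak DAP for the contradiction.
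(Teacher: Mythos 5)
Your overall strategy is the right one, and it is essentially the standard reduction: assume a contention between $T_1$ and $T_2$, truncate the two solo fragments so as to reach a configuration in which both transactions are simultaneously \emph{poised} to apply contending events on one base object, observe that disjoint-access (and disjointness of the data sets) is inherited by the prefix, and contradict weak DAP. Note that the paper itself gives no proof of this lemma --- it imports it from the cited works --- so the comparison here is against the argument those works carry out, which has exactly this shape. (One tacit assumption worth making explicit: $T_1$ and $T_2$ must be executed by distinct processes, since otherwise they can never both be poised; this is the intended setting and the one in which the lemma is applied.)

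There is, however, a genuine gap at the crux, namely your claim that after $\alpha\cdot\rho_1'\cdot\rho_2'$ transaction $T_2$ is poised to apply $e_2$. You fix an \emph{arbitrary} contended base object $b$ and let $e_2$ be $T_2$'s first event on $b$; but then the prefix $\rho_2'$ of $\rho_2$ before $e_2$ may contain events on \emph{other} base objects to which $T_1$ applies nontrivial primitives in the suffix of $\rho_1$ that you removed (this certainly happens if the two transactions contend on several base objects and $b$ is not the first one $T_2$ touches). In that case the literal sequence $\rho_2'$ is not replayable after $\alpha\cdot\rho_1'$ --- the responses of $T_2$'s rmw events no longer match the base-object states --- so $\alpha\cdot\rho_1'\cdot\rho_2'$ need not be an execution of $M$, and your appeal to ``determinism plus a quiescent-enough configuration'' only yields that $T_2$ takes \emph{some} solo steps after $\alpha\cdot\rho_1'$, not that it ever becomes poised on $b$. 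The missing idea is a minimal choice of the contention: let $e_2$ be the \emph{first} event of $T_2$ in $\rho_2$ that contends with \emph{any} event of $T_1$ in $\rho_1$, let $b$ be its base object, and truncate $\rho_1$ just before the first event of $T_1$ on $b$ that makes the pair contending (its first event on $b$ if $e_2$ is nontrivial, its first nontrivial event on $b$ otherwise). Minimality then gives the non-interference you need: no event of $\rho_2'$ contends with any event of $T_1$, and no removed event of $T_1$ applies a nontrivial primitive to a base object accessed in $\rho_2'$ (otherwise an earlier event of $T_2$ would already contend with $T_1$). Hence the configurations after $\alpha\cdot\rho_1$ and after $\alpha\cdot\rho_1'$ agree on every base object accessed in $\rho_2'$, determinism lets $\rho_2'$ replay verbatim with $T_2$ then poised to apply $e_2$, and both transactions are poised to apply contending events on $b$; your remaining steps (inheritance of disjoint-access by the prefix and the appeal to weak DAP) then go through.
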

\begin{figure*}[t]
\begin{center}
	\subfloat[$R_{\phi}(X_{i})$ must return $nv$ by strict serializability \label{sfig:inv-1}]{\scalebox{0.6}[0.6]{\begin{tikzpicture}
\node (r1) at (3,0) [] {};
\node (r2) at (7.7,0) [] {};

\node (w1) at (-2,0) [] {};

\draw (r1) node [below] {\normalsize {$R_{\phi}(X_1) \cdots R_{\phi}(X_{i-1})$}};
\draw (r1) node [above] {\normalsize {$i-1$ t-reads}};

\draw (r2) node [above] {\normalsize {$R_{\phi}(X_i)\rightarrow nv$}};

\draw (w1) node [above] {\normalsize {$W_i(X_i,nv)$}}; 
\draw (w1) node [below] {\normalsize {$T_i$ commits}};

\begin{scope}   
\draw [|-|,thick] (0,0) node[left] {$T_{\phi}$} to (6,0);
\draw [|-|,thick] (6.5,0) node[left] {} to (9,0);
\draw [-,dotted] (0,0) node[left] {} to (9,0);
\end{scope}
\begin{scope}   
\draw [|-|,thick] (-3,0) node[left] {$T_i$} to (-1,0);
\end{scope}
\end{tikzpicture}}}
        \\
        \vspace{2mm}
	\subfloat[$T_i$ does not observe any conflict with $T_{\phi}$ \label{sfig:inv-2}]{\scalebox{0.6}[0.6]{\begin{tikzpicture}
\node (r1) at (3,0) [] {};
\node (r3) at (12.2,0) [] {};


\node (w2) at (7.5,-2) [] {};

\draw (r1) node [below] {\small {$R_{\phi}(X_1) \cdots R_{\phi}(X_{i-1})$}};
\draw (r1) node [above] {\small {$i-1$ t-reads}};

\draw (w2) node [above] {\small {$W_{i}(X_{i},nv)$}}; 
\draw (w2) node [below] {\small {$T_{i}$ commits}};

\draw (r3) node [above] {\small {$R_{\phi}(X_{i})\rightarrow nv$}};
\draw (r3) node [below] {\small {new value}};

\begin{scope}   
\draw [|-|,thick] (0,0) node[left] {$T_{\phi}$} to (6,0);
\draw [|-|,dotted] (0,0) node[left] {$T_{\phi}$} to (13.5,0);
\draw [|-|,thick] (11,0) node[left] {} to (13.5,0);
\end{scope}
\begin{scope}   
\draw [|-|,thick] (6.5,-2) node[left] {$T_i$} to (9,-2);
\end{scope}
\end{tikzpicture}}}
	\caption{Executions in the proof of Lemma~\ref{lm:readdap}; By weak DAP, $T_{\phi}$ cannot distinguish this from the execution in Figure~\ref{sfig:inv-1}
        \label{fig:indis}} 
\end{center}
\end{figure*}
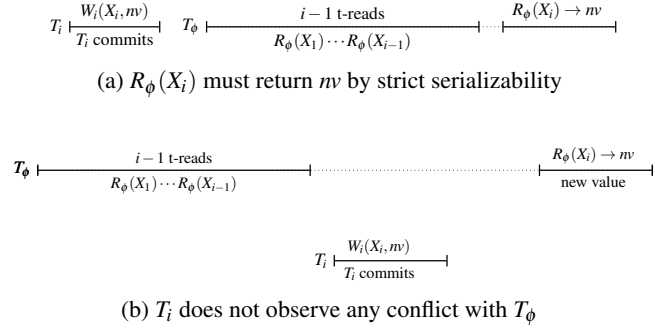
%
%
%
\section{Time and space complexity of sequential TMs}
\label{sec:invlb}
In this section, we prove that
(1) that a read-only transaction in an opaque TM featured with \emph{weak} DAP
and \emph{weak} invisible reads must \emph{incrementally} validate
every next read operation, and (2) a strictly serializable TM (under weak DAP and weak read invisibility), must 
have a read-only transaction that accesses a 
linear (in the size of the transaction's read set) number of distinct
base objects in the course of performing its last t-read and tryCommit operations.

We first prove the following lemma concerning strictly serializable weak DAP TM implementations.
\begin{lemma}
\label{lm:readdap}
Let $M$ be any strictly serializable, weak DAP TM implementation that provides sequential TM-progress.
Then, for all $i\in \mathbb{N}$, $M$ has an execution of the form $\pi^{i-1}\cdot \rho^i\cdot \alpha^i$ where,
\begin{itemize}
\item
$\pi^{i-1}$ is the complete step contention-free execution of read-only transaction $T_{\phi}$ that performs
$(i-1)$ t-reads: $\Read_{\phi}(X_1)\cdots \Read_{\phi}(X_{i-1})$,
\item
$\rho^i$ is the t-complete step contention-free execution of a transaction $T_{i}$
that writes $nv_i\neq v_i$ to $X_i$ and commits,
\item
$\alpha_i$ is the complete step contention-free execution fragment of $T_{\phi}$ that performs its $i^{th}$ t-read:
$\Read_{\phi}(X_i) \rightarrow nv_i$.
\end{itemize}
\end{lemma}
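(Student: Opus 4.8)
\noindent\textit{Proof plan.} The plan is to establish the claim for each fixed $i\in\mathbb{N}$ by first exhibiting a \emph{reference execution} in which the writer $T_i$ runs entirely before $T_\phi$ --- there strict serializability pins down the value returned by $T_\phi$'s $i$-th t-read --- and then ``commuting'' the fragment $\rho^i$ to the right of $\pi^{i-1}$ using weak DAP (Lemma~\ref{lm:dap}). No genuine induction is needed beyond the trivial fact that $\pi^{i-1}$ is a prefix of the solo run of $T_\phi$.

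\noindent\textit{Reference execution.} Fix distinct t-objects $X_1,\dots,X_i$, let $v_j$ be the initial value of $X_j$, and pick $nv_i\neq v_i$ (possible since $|V|\geq 2$). By sequential TM-progress, $T_i$ performing $\Write_i(X_i,nv_i)$ followed by $\TryC_i$, run step contention-free from the initial (t-quiescent) configuration, commits; call this fragment $\rho^i$. The configuration after $\rho^i$ is again t-quiescent, so by sequential TM-progress $T_\phi$ run step contention-free from it commits within finitely many steps; let $\pi^{i-1}$ be the prefix consisting of $\Read_\phi(X_1)\cdots\Read_\phi(X_{i-1})$, let $\alpha^i$ be the next fragment, containing $\Read_\phi(X_i)$, and let $E^{\star}$ be the extension in which $T_\phi$ additionally commits. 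In $E^{\star}$ both $T_i$ and $T_\phi$ are committed, $T_i\prec^{RT}_{E^{\star}}T_\phi$, $T_\phi$ is read-only, and $T_i$ is the only transaction writing $X_i$. By strict serializability, in the serialization $S$ of the committed transactions of $E^{\star}$ transaction $T_i$ precedes $T_\phi$, so the latest value written to $X_i$ before $\Read_\phi(X_i)$ in $S$ is $nv_i$; as $\Read_\phi(X_i)$ does not return $A_\phi$, legality of $S$ forces $\Read_\phi(X_i)\to nv_i$. Since t-operation responses depend only on the prefix of the execution, $\Read_\phi(X_i)\to nv_i$ already in $\rho^i\cdot\pi^{i-1}\cdot\alpha^i$.

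\noindent\textit{Commuting.} Now I would move $\rho^i$ past $\pi^{i-1}$. The key point is that the data set of $T_\phi$ \emph{during $\pi^{i-1}$} is $\{X_1,\dots,X_{i-1}\}$, which is disjoint from $\Dset(T_i)=\{X_i\}$, and no transaction other than $T_\phi$ and $T_i$ participates, so $T_\phi$ (restricted to $\pi^{i-1}$) and $T_i$ (restricted to $\rho^i$) are disjoint-access in both orderings; Lemma~\ref{lm:dap} then guarantees they contend on no base object in either ordering. Consequently: (i) in the solo run $\pi^{i-1}$ from the initial configuration $T_\phi$ behaves exactly as in $E^{\star}$ (it never accesses a base object $T_i$ modifies), and when $\rho^i$ is appended $T_i$ still finds every base object it accesses in its initial value, so by determinism and sequential TM-progress $T_i$ commits with the very same fragment $\rho^i$, making $\pi^{i-1}\cdot\rho^i$ a legitimate execution; (ii) partitioning the base objects into those touched only by $T_\phi$ in $\pi^{i-1}$, those touched only by $T_i$ in $\rho^i$, and those touched by both (necessarily only by trivial events, by Lemma~\ref{lm:dap}), one sees that the configuration after $\pi^{i-1}\cdot\rho^i$ coincides with the configuration after $\rho^i\cdot\pi^{i-1}$, with $p_\phi$ in the same local state in both; (iii) hence, by determinism of the implementation, $T_\phi$'s next fragment $\alpha^i$ (its $i$-th t-read) unfolds identically after $\pi^{i-1}\cdot\rho^i$ and after $\rho^i\cdot\pi^{i-1}$, so $\pi^{i-1}\cdot\rho^i\cdot\alpha^i$ is an execution of $M$ in which $\Read_\phi(X_i)\to nv_i$, as required.

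\noindent\textit{Main obstacle.} The part I expect to be delicate is the commuting step: one must invoke Lemma~\ref{lm:dap} only for the $(i-1)$-read prefix $\pi^{i-1}$, whose data set is disjoint from $T_i$'s, and never for $\alpha^i$, whose data set shares $X_i$ with $T_i$ --- indeed it is precisely the base-object accesses performed during $\alpha^i$ that let $T_\phi$ observe $T_i$'s update --- and one must be careful to \emph{transfer} the behaviour of $\alpha^i$ from the reference execution rather than re-deriving it, since the configuration after $\pi^{i-1}\cdot\rho^i$ is not t-quiescent and sequential TM-progress cannot be applied to $\alpha^i$ there directly.
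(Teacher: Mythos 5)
Your proposal is correct and follows essentially the same route as the paper's proof: build the reference execution $\rho^i\cdot\pi^{i-1}\cdot\alpha^i$ via sequential TM-progress, use strict serializability to force $\Read_\phi(X_i)\rightarrow nv_i$ there, and then use Lemma~\ref{lm:dap} (disjoint data sets during $\pi^{i-1}$ and $\rho^i$) to swap the two fragments by indistinguishability to $T_\phi$. Your write-up is in fact somewhat more careful than the paper's about why the swapped configuration coincides and why $\alpha^i$'s behaviour is transferred rather than re-derived, but the underlying argument is the same.
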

\begin{proof}
By sequential TM-progress, $M$ has an execution of the form $\rho^i\cdot \pi^{i-1}$.
Since $\Dset(T_k) \cap \Dset(T_{i}) =\emptyset$ in $\rho^i\cdot \pi^{i-1}$,
by Lemma~\ref{lm:dap}, transactions $T_{\phi}$ and $T_i$ do not contend
on any base object in execution $\rho^i\cdot \pi^{i-1}$.
Thus, $\rho^i\cdot \pi^{i-1}$ is also an execution of $M$.

By assumption of strict serializability, $\rho^i\cdot \pi^{i-1} \cdot \alpha_i$ is an execution
of $M$ in which the t-read of $X_i$ performed by $T_{\phi}$ must return $nv_i$.
But $\rho^i \cdot \pi^{i-1} \cdot \alpha_i$ is indistinguishable to $T_{\phi}$ from
$\pi^{i-1}\cdot \rho^i \cdot \alpha_i$.
Thus, $M$ has an execution of the form $\pi^{i-1}\cdot \rho^i \cdot \alpha_i$.
\end{proof}
\begin{theorem}
\label{th:iclb}
For every weak DAP TM implementation $M$ that provides ICF TM-liveness, sequential TM-progress and uses weak invisible reads,
\begin{enumerate}
\item[(1)]
If $M$ is opaque,
for every $m\in \mathbb{N}$,
there exists an execution $E$ of $M$ 
such that some transaction $T\in \ms{txns}(E)$ performs $\Omega(m^2)$ steps, where $m=|\Rset(T_k)|$.
\item[(2)]
if $M$ is strictly serializable,
for every $m\in \mathbb{N}$,
there exists an execution $E$ of $M$
such that some transaction $T_k\in \ms{txns}(E)$
accesses at least $m-1$ distinct base objects during the executions of the $m^{th}$ t-read operation
and $\TryC_k()$, where $m=|\Rset(T_k)|$.
\end{enumerate}
\end{theorem}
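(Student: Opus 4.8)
The plan is to build, for a given $m$, an execution in which a read-only transaction $T_\phi$ reads $m$ distinct t-objects $X_1,\dots,X_m$ and show that during each t-read $\Read_\phi(X_i)$ (and finally during $\TryC_\phi$) the transaction must access "many" base objects, so that the costs accumulate. The backbone is Lemma~\ref{lm:readdap}: for each $i$ it gives us an execution $\pi^{i-1}\cdot\rho^i\cdot\alpha^i$ in which $T_\phi$ has already read $X_1,\dots,X_{i-1}$ in a step contention-free way, then a disjoint committed writer $T_i$ overwrites $X_i$ with $nv_i$, and then $T_\phi$'s $i$-th read must return $nv_i$. The key point is that in the purely step contention-free solo run $\pi^{m}$ (no writers at all), by weak invisible reads $T_\phi$'s reads apply only trivial primitives, and by sequential TM-progress $T_\phi$ would read the initial value $v_i$ from each $X_i$; so in $\pi^{m}$ the read of $X_i$ cannot "see" anything about $X_j$ for $j<i$. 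I would then argue that $\alpha^i$ — the $i$-th read run after the writer $T_i$ committed — must differ observably from the corresponding solo read, and that the only way $T_\phi$ can detect the write to $X_i$ is by accessing a base object that $T_i$ wrote.

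The heart of part (1) is a validation argument. Fix $i$ and, for each $j<i$, consider inserting a committed writer $T_j$ on $X_j$ (as in Lemma~\ref{lm:readdap}) \emph{just before} the $i$-th read $\alpha^i$; by opacity, once $T_\phi$ has performed $\Read_\phi(X_j)\to v_j$ and then $X_j$ is overwritten by a committed $T_j$, the later read $\Read_\phi(X_i)$ must still be consistent with a serialization in which $T_j$ is ordered after $T_\phi$ — but if $T_\phi$ had also (hypothetically) observed the new value on a re-read of $X_j$, opacity would be violated. The standard way this is turned into a lower bound: during $\alpha^i$, transaction $T_\phi$ must access, for every $j \le i$, some base object that distinguishes the "$T_j$ wrote $X_j$" world from the "$T_j$ did not" world; by weak DAP these distinguishing base objects are distinct for distinct $j$ (a writer on $X_j$ and a writer on $X_\ell$ are disjoint-access and hence, by Lemma~\ref{lm:dap}, touch disjoint base objects, so $T_\phi$ needs a separate "probe" for each). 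Hence $\alpha^i$ costs $\Omega(i)$ steps, and summing over $i=1,\dots,m$ gives $\Omega(m^2)$ total steps for $T_\phi$ — this is the execution $E$ we exhibit, namely $\pi^{m}$ itself (the solo run), since the incremental-validation lower bound must already hold there: if $T_\phi$ in its solo run did \emph{not} access $\Omega(i)$ objects during its $i$-th read, we could insert the writers and derive an opacity violation from indistinguishability.

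For part (2), I would specialize to $i=m$ and to the last read plus $\TryC_\phi$, and weaken opacity to strict serializability. Here Lemma~\ref{lm:readdap} already hands us $\pi^{m-1}\cdot\rho^m\cdot\alpha^m$ where $\Read_\phi(X_m)$ returns the new value $nv_m$. The argument is: consider, for each $j \in \{1,\dots,m-1\}$, the execution where instead we overwrite $X_j$ (not $X_m$) by a committed writer $T_j$ right before $T_\phi$'s last read and $\TryC_\phi$; strict serializability forces $T_\phi$, if it commits, to be serialized before $T_j$, so $T_\phi$ must \emph{not} have returned $nv_j$ — and to guarantee this in every such world, during $\Read_\phi(X_m)\cdot\TryC_\phi$ it must read a base object modified by $T_j$. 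By weak DAP and Lemma~\ref{lm:dap}, the base object written by $T_j$ (on $X_j$) is disjoint from that written by $T_\ell$ (on $X_\ell$) for $j\ne\ell$, so these are $m-1$ distinct base objects, all accessed by $T_\phi$ during its $m$-th t-read and $\TryC_\phi$. That yields the claimed $m-1$ distinct base objects; the execution $E$ is again obtained from Lemma~\ref{lm:readdap}'s $\pi^{m-1}$-based execution.

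The main obstacle I anticipate is making the "$T_\phi$ must probe a distinct base object for each $j$" step fully rigorous: one has to argue that a single base-object access by $T_\phi$ cannot simultaneously certify the absence of writers on two different $X_j, X_\ell$. This is exactly where weak DAP (via Lemma~\ref{lm:dap}) is essential — the two hypothetical writers $T_j, T_\ell$ are disjoint-access from each other and from $T_\phi$ until $T_\phi$ actually touches their footprints, so their modified base objects are disjoint and $T_\phi$'s behavior in the combined world is determined, read-by-read, by which of these disjoint footprints it has so far inspected. The careful bookkeeping — choosing the order in which writers are inserted so that indistinguishability to $T_\phi$ is preserved up to the crucial read, and invoking weak invisible reads to keep $T_\phi$'s own earlier reads trivial so they do not themselves constitute "probes" — is the delicate part, but it is routine given Lemmas~\ref{lm:dap} and~\ref{lm:readdap}.
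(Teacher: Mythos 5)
Your per-read counting device is essentially the paper's: force the $i$-th read to behave differently depending on whether a committed writer $\beta^{\ell}$ on a previously read $X_{\ell}$ is present, use weak invisible reads (to erase $T_{\phi}$'s prefix) together with Lemma~\ref{lm:dap} to argue that the footprints of $\beta^{1},\dots,\beta^{i-1}$ and $\rho^{i}$ are pairwise disjoint, and conclude that the read needs one probe per $\ell$. The genuine gap is in where you locate the cost for part (1): you claim the $\Omega(m^2)$ steps are incurred in the solo run $\pi^{m}$, arguing that otherwise ``we could insert the writers and derive an opacity violation from indistinguishability.'' That step fails. In $\pi^{m}$ there is no writer on $X_i$, so in every world obtained by inserting committed writers on $X_j$, $j<i$, before the $i$-th read, returning the initial value $v$ for $X_i$ remains perfectly opaque (serialize $T_{\phi}$ before all inserted writers); indistinguishability therefore yields no contradiction, and nothing forces the solo read to touch any base object associated with $X_1,\dots,X_{i-1}$ --- an implementation may run each solo read in $O(1)$ steps and validate only upon observing a fresh value. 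The differing outcomes that drive the lower bound come precisely from the committed writer $\rho^{i}$ on $X_i$ supplied by Lemma~\ref{lm:readdap}: without any $\beta^{\ell}$ the $i$-th read must return $nv_i$, while with some $\beta^{\ell}$ it must return $v$ or abort (the paper's Claim~\ref{cl:ic2}); hence the $i-1$ probes are forced in the executions $\pi^{i-1}\cdot\beta^{\ell}\cdot\rho^{i}\cdot\alpha^{i}_j$ (equivalently in $\pi^{i-1}\cdot\rho^{i}\cdot\alpha^{i}$), not in $\pi^{m}$, and the quadratic count is assembled from that family of executions as in the paper.

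A similar slip appears in part (2): you say ``consider the execution where instead we overwrite $X_j$ (not $X_m$),'' i.e., you drop the writer on $X_m$. With $\rho^{m}$ removed, the last read may return the initial value of $X_m$ and $T_{\phi}$ may commit serialized before $T_j$, so no probe is forced and the argument collapses; also ``$T_{\phi}$ must not have returned $nv_j$'' is not meaningful, since $T_{\phi}$ never re-reads $X_j$. The paper keeps both writers, working with $E_{\ell}=\pi^{m-1}\cdot\beta^{\ell}\cdot\rho^{m}\cdot\bar{\alpha}^{m}$: there, if $\Read_{\phi}(X_m)$ returns $nv_m$ then $\TryC_{\phi}$ must abort (committing would place $T_{\phi}$ after $T_m$, hence after $T_{\ell}$ by real-time order, making the earlier read of $X_{\ell}$ illegal), whereas with no $\beta^{\ell}$ the read returns $nv_m$; distinguishing these $m-1$ worlds, whose configurations differ only on the pairwise disjoint footprints of the $\beta^{\ell}$'s, is what forces the $m-1$ distinct base-object accesses during the last read and $\TryC_{\phi}$ jointly. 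So the skeleton of your proposal is right, but the executions you exhibit are not the ones in which the claimed costs are provably incurred.
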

\begin{proof}
For all $i\in \{1,\ldots , m\}$, let $v$ be the initial value of t-object $X_i$.

($1$) Suppose that $M$ is opaque.
Let $\pi^{m}$ denote the complete step contention-free execution of a transaction
$T_{\phi}$ that performs ${m}$ t-reads: $\Read_{\phi}(X_1)\cdots \Read_{\phi}(X_{m})$
such that for all $i\in \{1,\ldots , m \}$, $\Read_{\phi}(X_i) \rightarrow v$.

By Lemma~\ref{lm:readdap}, for all $i\in \{2,\ldots, m\}$, $M$ has an execution of the form 
$E^{i}=\pi^{i-1}\cdot \rho^i \cdot \alpha_i$.

For each $i\in \{2,\ldots, m\}$, $j\in \{1,2\}$ and $\ell \leq (i-1)$, 
we now define an execution of the form  $\mathbb{E}_{j\ell}^{i}=\pi^{i-1}\cdot \beta^{\ell}\cdot \rho^i \cdot \alpha_j^i$
as follows:
\begin{itemize}
\item
$\beta^{\ell}$ is the t-complete step contention-free execution fragment of a transaction $T_{\ell}$
that writes $nv_{\ell}\neq v$ to $X_{\ell}$ and commits
\item
$\alpha_1^i$ (and resp. $\alpha_2^i$) is the complete step contention-free execution fragment of 
$\Read_{\phi}(X_i) \rightarrow v$ (and resp. $\Read_{\phi}(X_i) \rightarrow A_{\phi}$).
\end{itemize}
\begin{claim}
\label{cl:ic2}
For all $i\in \{2,\ldots, m\}$ and $\ell \leq (i-1)$, $M$ has an execution of the form $\mathbb{E}_{1\ell}^{i}$ or 
$\mathbb{E}_{2\ell}^{i}$.
\end{claim}
\begin{proof}
For all $i \in \{2,\ldots, m\}$, $\pi^{i-1}$
is an execution of $M$.
By assumption of weak invisible reads and sequential TM-progress, $T_{{\ell}}$ must be committed in $\pi^{i-1}\cdot \rho^{\ell}$
and $M$ has an execution of the form $\pi^{i-1}\cdot \beta^{\ell}$.
By the same reasoning, since $T_i$ and $T_{\ell}$ have disjoint data sets,
$M$ has an execution of the form $\pi^{i-1}\cdot\beta^{\ell}\cdot \rho^i$.

Since the configuration after $\pi^{i-1}\cdot\beta^{\ell}\cdot \rho^i$ is quiescent,
by ICF TM-liveness, $\pi^{i-1}\cdot\beta^{\ell}\cdot \rho^i$ extended with $\Read_{\phi}(X_i)$
must return a matching response.
If $\Read_{\phi}(X_i) \rightarrow v_i$, then clearly $\mathbb{E}_{1}^{i}$
is an execution of $M$ with $T_{\phi}, T_{i-1}, T_i$ being a valid serialization
of transactions.
If $\Read_{\phi}(X_i) \rightarrow A_{\phi}$, the same serialization
justifies an opaque execution.

Suppose by contradiction that there exists an execution of $M$ such that
$\pi^{i-1}\cdot\beta^{\ell}\cdot \rho^i$ is extended with the complete execution
of $\Read_{\phi}(X_i) \rightarrow r$; $r \not\in \{A_{\phi},v\}$. 
The only plausible case to analyse is when $r=nv$.
Since $\Read_{\phi}(X_i)$ returns the value of $X_i$ updated by $T_i$, 
the only possible serialization for transactions is $T_{\ell}$, $T_i$, $T_{\phi}$; but $\Read_{\phi}(X_{\ell})$
performed by $T_k$ that returns the initial value $v$
is not legal in this serialization---contradiction.
\end{proof}
We now prove that, for all $i\in \{2,\ldots, m\}$, $j\in \{1,2\}$ and $\ell \leq (i-1)$, transaction $T_{\phi}$ must access
$(i-1)$ different base objects during the execution of $\Read_{\phi}(X_i)$ in the execution
$\pi^{i-1}\cdot \beta^{\ell}\cdot \rho^i \cdot \alpha_j^i$.

By the assumption of weak invisible reads,
the execution $\pi^{i-1}\cdot \beta^{\ell}\cdot \rho^i \cdot \alpha_j^i$
is indistinguishable to
transactions $T_{\ell}$ and $T_{i}$
from the execution ${\tilde \pi}^{i-1}\cdot \beta^{\ell}\cdot \rho^i \cdot \alpha_j^i$, where $\Rset(T_{\phi})=\emptyset$
in ${\tilde \pi}^{i-1}$.
But transactions $T_{\ell}$ and $T_{i}$ are disjoint-access in ${\tilde \pi}^{i-1}\cdot \beta^{\ell}\cdot \rho^i$ and by Lemma~\ref{lm:dap},
they cannot contend on the same base object in this execution.

Consider the $(i-1)$ different executions: 
$\pi^{i-1}\cdot\beta^{1}\cdot \rho^i$, $\ldots$, $\pi^{i-1}\cdot\beta^{i-1}\cdot \rho^i$.
For all $\ell, \ell' \leq (i-1)$;$\ell' \neq \ell$, 
$M$ has an execution of the form $\pi^{i-1}\cdot \beta^{\ell}\cdot \rho^i \cdot \beta^{\ell'}$
in which transactions $T_{\ell}$ and $T_{\ell'}$ access mutually disjoint data sets.
By weak invisible reads and Lemma~\ref{lm:dap}, the pairs of transactions $T_{\ell'}$, $T_{i}$ and $T_{\ell'}$, $T_{\ell}$
do not contend on any base object in this execution.
This implies that $\pi^{i-1}\cdot \beta^{\ell} \cdot \beta^{\ell'} \cdot \rho^i$ is an execution of $M$ in which
transactions $T_{\ell}$ and $T_{\ell'}$ each apply nontrivial primitives
to mutually disjoint sets of base objects in the execution fragments $\beta^{\ell}$ and $\beta^{\ell'}$ respectively 
(by Lemma~\ref{lm:dap}).

This implies that for any $j\in \{1,2\}$, $\ell \leq (i-1)$, the configuration $C^i$ after $E^i$ differs from the configurations
after $\mathbb{E}_{j\ell}^{i}$ only in the states of the base objects that are accessed in the fragment $\beta^{\ell}$.
Consequently, transaction $T_{\phi}$ must access at least $i-1$ different base objects
in the execution fragment $\pi_j^i$
to distinguish configuration $C^i$ from the configurations
that result after the $(i-1)$ different executions 
$\pi^{i-1}\cdot\beta^{1}\cdot \rho^i$, $\ldots$, $\pi^{i-1}\cdot\beta^{i-1}\cdot \rho^i$ respectively.

Thus, for all $i \in \{2,\ldots, m\}$, transaction $T_{\phi}$ must perform at least $i-1$ steps 
while executing the $i^{th}$ t-read in $\pi_{j}^i$ and $T_{\phi}$ itself must perform 
$\sum\limits_{i=1}^{m-1} i=\frac{m(m-1)}{2}$ steps.

($2$) Suppose that $M$ is strictly serializable, but not opaque.
Since $M$ is strictly serializable, by Lemma~\ref{lm:readdap}, it has an execution of the form 
$E=\pi^{m-1}\cdot \rho^{m} \cdot \alpha_m$.

For each $\ell \leq (i-1)$, we prove that $M$ has an execution of the form 
$E_{\ell}= \pi^{m-1}\cdot \beta^{\ell}\cdot \rho^m \cdot {\bar \alpha}^m$
where ${\bar \alpha}^m$ is the complete step contention-free execution fragment of $\Read_{\phi}(X_m)$ followed
by the complete execution of $\TryC_{\phi}$.
Indeed, by weak invisible reads, $\pi^{m-1}$ does not contain any nontrivial events
and the execution $\pi^{m-1}\cdot \beta^{\ell}\cdot \rho^m$ is indistinguishable to transactions
$T_{\ell}$ and $T_m$ from the executions ${\tilde \pi}^{m-1}\cdot \beta^{\ell}$ and ${\tilde \pi}^{m-1}\cdot\beta^{\ell} \cdot \rho^m$ 
respectively, where $\Rset(T_{\phi})=\emptyset$ in ${\tilde \pi}^{m-1}$.
Thus, applying Lemma~\ref{lm:dap}, transactions $\beta^{\ell} \cdot \rho^m$ do not contend
on any base object in the execution $\pi^{m-1}\cdot \beta^{\ell}\cdot \rho^m$. 
By ICF TM-liveness, $\Read_{\phi}(X_m)$ and $\TryC_{\phi}$ must return matching responses in the execution
fragment ${\bar \alpha}^m$ that extends $\pi^{m-1}\cdot \beta^{\ell}\cdot \rho^m$.
Consequently, for each $\ell \leq (i-1)$, $M$ has an execution of the form 
$E_{\ell}= \pi^{m-1}\cdot \beta^{\ell}\cdot \rho^m \cdot {\bar \alpha}^m$
such that transactions $T_{\ell}$ and $T_m$ do not contend on any base object.

Strict serializability of $M$ means that if $\Read_{\phi}(X_m)\rightarrow nv$ in the execution fragment ${\bar \alpha}^m$, 
then $\TryC_{\phi}$ must return $A_{\phi}$.
Otherwise if $\Read_{\phi}(X_m)\rightarrow v$ (i.e. the initial value of $X_m$), then
$\TryC_{\phi}$ may return $A_{\phi}$ or $C_{\phi}$.

Thus, as with ($1$), in the worst case, $T_{\phi}$ must access at least $m-1$ distinct base objects 
during the executions of
$\Read_{\phi}(X_m)$ and $\TryC_{\phi}$ to distinguish the configuration $C^i$ from the configurations
after the $m-1$ different executions
$\pi^{m-1}\cdot\beta^{1}\cdot \rho^m$, $\ldots$, $\pi^{m-1}\cdot\beta^{m-1}\cdot \rho^m$ respectively.
\end{proof}

\ignore{
Theorem~\ref{th:iclb} improves on the result in \cite{tm-book}, in which a linear lower bound was proven
for opaque progressive TMs that provide invisible reads and are \emph{strict data-partitioned}. 

Let $E|X$ denote the subsequence of the execution $E$ derived by removing all events associated with t-object $X$.
A TM implementation $M$ is \emph{strict data-partitioned}~\cite{tm-book}, if for every t-object $X$, 
there exists a set of base objects $\ms{Base}_M(X)$ such that
\begin{itemize}
\item
for any two t-objects $X_1, X_2$; $\ms{Base}_M(X_1) \cap \ms{Base}_M(X_2)=\emptyset$,
\item 
for every execution $E$ of $M$ and every transaction $T \in \ms{txns}(E)$,
every base object accessed by $T$ in $E$ is contained in $\ms{Base}_M(X)$ for some $X\in \Dset(T)$
\item
for all executions $E$ and $E'$ of $M$, if $E|X=E|X$ for some t-object $X$, then the configurations after $E$ and
$E'$ only differ in the states of the base objects in $\ms{Base}_M(X)$.
\end{itemize}
\begin{observation}
Every TM implementation that is strict data-partitioned satisfies weak DAP, but not vice-versa.
\end{observation}
\begin{proof}
Let $M$ be any strict data-partitioned TM implementation. Then, $M$ is also weak DAP. Indeed,
since any two transactions accessing mutually disjoint data sets in a strict data-partitioned implementation
cannot access a common base object in any execution $E$ of $M$, $E$ also ensures that
any two transactions
concurrently contend on the same base object in $E$ only if   
they are not disjoint-access in $E$ or they access a common t-object.

Consider the following execution $E$ of a weak DAP TM implementaton $M$ 
that begins with the t-incomplete execution of a transaction $T_0$ that 
reads $X$ and writes to $Y$, followed by the step contention-free executions of two transactions $T_1$ and $T_2$ 
which write to $X$ and read $Y$ respectively. Transactions $T_1$ and $T_2$ may contend on a base object since 
there is a path between $X$ and $Y$ in $G(T_1,T_2,E)$. However, a strict data-partitioned TM implementation
would preclude transactions $T_1$ and $T_2$ from accessing the same base object.
\end{proof}
}

%
%
\section{RMR complexity of strongly progressive TMs}
\label{sec:rmr}
\begin{algorithm}[t]
\caption{Mutual-exclusion object $L$ from a strongly progressive, strict serializable TM $M$; code for process $p_i$; $1\leq i \leq n$}
\label{alg:mutex-dsm}
\begin{algorithmic}[1]
  	\begin{multicols}{2}
  	{
  	\footnotesize
	\Part{Local variables}{
		\State bit $\ms{face}_i$, for each process $p_i$
	}\EndPart
	\Statex
	\Part{Shared objects}{
		\State strongly progressive, strictly 
		\State ~~serializable TM $M$
		\State ~~t-object $X$, initially $\bot$  
		\State ~~storing value $v \in \{[p_i, \ms{face}_i] \} \cup \{\bot\}$
				
		\State for each tuple $[p_i,\ms{face}_i]$
		\State ~~$\ms{Done}[p_i,\ms{face}_i] \in \{\true,\false\}$
		\State ~~$\ms{Succ}[p_i,\ms{face}_i] \in \{p_1,\ldots , p_n \} \cup \{\bot\}$
		\State for each $p_i$ and $j\in \{1,\ldots , n\}\setminus \{i\}$
		\State ~~$\ms{Lock}[p_i][p_j] \in \{\ms{locked},\ms{unlocked}\}$
	}\EndPart	
		
	\Statex	
	\Statex
	\Part{Function: \lit{func}()}{
		\State \textbf{atomic using $M$}
		
		 \State ~~~~$\ms{value}:= \lit{tx-read}(X)$
		 \State ~~~~$\lit{tx-write}(X,[p_i,\ms{face}_i])$
		\State \textbf{on abort Return} $\false$
		 \Return $\ms{value}$ \EndReturn
		 
	}\EndPart	
 	
 	\newpage
	\Part{Entry}{
		 \State $\ms{face}_i :=1- \ms{face}_i$
		 \State $\ms{Done}[p_i,\ms{face}_i].\lit{write}(\false)$ \label{line:entrydone1}
		 \State $\ms{Succ}[p_i,\ms{face}_i].\lit{write}(\bot)$ \label{line:entrysucc1}
		 \While{$(\ms{prev} \gets \lit{func})=\false$} \label{line:tm}
		      \State \textbf{no op}
		 \EndWhile \label{line:while}
		 \If{$\ms{prev} \neq \bot$}
		    \State $\ms{Lock}[p_i][\ms{prev}.pid].\lit{write}(\ms{locked})$ \label{line:entrylock}
		    \State $\ms{Succ}[\ms{prev}].\lit{write}(p_i)$ \label{line:entrysucc2}
		    \If{$\ms{Done}[\ms{prev}]=\false$} \label{line:entrydone2}
		      \While{$\ms{Lock}[p_i][\ms{prev}.pid]=\ms{unlocked}$} \label{line:entryunlock}
			\State \textbf{no op}
		      \EndWhile
		    \EndIf
		 \EndIf
		 \Return $ok$ \EndReturn
		 \State \Comment{Critical section}
		 
   	 }\EndPart
	\Statex
	
	\Part{Exit}{
		  \State $\ms{Done}[p_i,\ms{face}_i].\lit{write}(\true)$ \label{line:exitdone}
		  \State $\ms{Lock}[\ms{Succ}[p_i, \ms{face}_i]][p_i].\lit{write}(\ms{unlocked})$ \label{line:exitunlock}
		  \Return $ok$ \EndReturn
					
   	}\EndPart
		
	}
	\end{multicols}
  \end{algorithmic}
\end{algorithm}
In this section, we prove every strongly progressive strictly serializable TM that uses only 
read, write and \emph{conditional}
RMW primitives has an execution in which
in which $n$
concurrent processes perform transactions on a single data item and incur
$\Omega (\log n)$ \emph{remote memory references}~\cite{anderson-90-tpds}.

\vspace{1mm}\noindent\textbf{Remote memory references(RMR)~\cite{rmr-mutex}.}
In the \emph{cache-coherent (CC) shared memory}, each process maintains \emph{local}
copies of shared objects inside its cache, whose consistency is ensured by a coherence protocol.
Informally, we say that an access to a base object $b$ is \emph{remote} to a process $p$ and 
causes a \emph{remote memory reference (RMR)} if $p$'s cache contains a 
cached copy of the object that is out of date or \emph{invalidated}; otherwise the access is \emph{local}.

In the \emph{write-through (CC) protocol}, to read a base object $b$, process $p$ must have a cached copy of $b$ that
has not been invalidated since its previous read. Otherwise, $p$ incurs a RMR. 
To write to $b$, $p$ causes a RMR that invalidates all cached copies
of $b$ and writes to the main memory.

In the \emph{write-back (CC) protocol}, $p$ reads a base object $b$ without causing a RMR if it holds a cached copy of $b$
in shared or exclusive mode; otherwise the access of $b$ causes a RMR that (1) 
invalidates all copies of $b$ held in exclusive mode, and writing $b$ back to the main memory,
(2) creates a cached copy of $b$ in shared mode.
Process $p$ can write to $b$ without causing a RMR if it holds a copy of $b$ in exclusive mode; otherwise
$p$ causes a RMR that invalidates all cached copies of $b$ and creates a cached copy of $b$ in exclusive mode.

In the \emph{distributed shared memory (DSM)}, each register is forever assigned to a single process and it
\emph{remote} to the others. Any access of a remote register causes a RMR.

\vspace{1mm}\noindent\textbf{Mutual exclusion.}
The \emph{mutex object} supports two operations: \emph{Enter} and \emph{Exit}, both of which return the response $ok$.
We say that a process $p_i$ \emph{is in the critical section after an execution $\pi$} if
$\pi$ contains the invocation of $\lit{Enter}$ by
$p_i$ that returns $ok$, but does not contain a subsequent
invocation of $\lit{Exit}$ by $p_i$ in $\pi$. 
     
A mutual exclusion implementation satisfies the following properties:

(\emph{Mutual-exclusion}) After any execution
  $\pi$, there exists at most one process that is in the critical section. 

(\emph{Deadlock-freedom})  Let $\pi$ be any execution that contains the invocation of
  $\lit{Enter}$ by process $p_i$. Then, in every extension of $\pi$ in which every process takes infinitely many steps, some
  process is in the critical section.

(\emph{Finite-exit)} Every process completes the $\lit{Exit}$ operation within a finite number of steps.
%
\subsection{Mutual exclusion from a strongly progressive TM}
\label{sec:mutex-dsm}
We describe an implementation of a mutex object $L(M)$ from a strictly serializable, strongly
progressive TM implementation $M$ (Algorithm~\ref{alg:mutex-dsm}).
The algorithm is based on the mutex implementation in \cite{lee-thesis}.

Given a sequential implementation, we use a TM to execute the sequential code in a concurrent environment
by encapsulating each sequential operation within an \emph{atomic}
transaction that replaces each read and write of a t-object with the
transactional read and write implementations, respectively. 
If the transaction commits, then the result of the operation is
returned; otherwise if one of the transactional operations aborts.
For instance, in Algorithm~\ref{alg:mutex-dsm}, we wish to atomically read a t-object $X$, write a new value to it
and return the old value of $X$ prior to this write. To achieve this, we
employ a strictly serializable TM implementation $M$. 
Moreover, we assume that $M$ is strongly progressive, \emph{i.e.}, in every execution,
at least one transaction successfully commits and the value of $X$ is returned.

\vspace{1mm}\noindent\textbf{Shared objects.}
We associate each process $p_i$ with two alternating identities $[p_i,\ms{face}_i]$; $\ms{face}_i \in \{0,1\}$.
The strongly progressive TM implementation $M$ is used to enqueue processes that attempt to enter the critical section within
a single t-object $X$ (initially $\bot$).
For each $[p_i,\ms{face}_i]$, $L(M)$ uses a register bit $\ms{Done}[p_i,\ms{face}_i]$ that indicates if this face of the process
has left the critical section or is executing the $\lit{Entry}$ operation. Additionally, we use
a register $\ms{Succ}[p_i,\ms{face}_i]$ that stores the process expected to succeed $p_i$ in the critical section.
If $\ms{Succ}[p_i,\ms{face}_i]=p_j$, we say that $p_j$ is the \emph{successor of $p_i$} (and $p_i$ is the \emph{predecessor} of $p_j$). 
Intuitively, this means
that $p_j$ is expected to enter the critical section immediately after $p_i$.
Finally, $L(M)$ uses a $2$-dimensional bit array $\ms{Lock}$: for each process $p_i$, there are $n-1$ registers associated with
the other processes. For all $j\in \{0,\ldots , n-1\}\setminus \{i\}$, the registers $\ms{Lock}[p_i][p_j]$
are local to $p_i$ and registers $\ms{Lock}[p_j][p_i]$ are remote to $p_i$. Process $p_i$ can only access registers
in the $\ms{Lock}$ array that are local or remote to it.

\vspace{1mm}\noindent\textbf{Entry operation.}
A process $p_i$ adopts a new identity $\ms{face}_i$ and writes $\false$ to $\ms{Done}(p_i,\ms{face}_i)$ to indicate
that $p_i$ has started the $\lit{Entry}$ operation. Process $p_i$ now initializes the successor of $[p_i,\ms{face}_i]$
by writing $\bot$ to $\ms{Succ}[p_i,\ms{face}_i]$. Now, $p_i$ uses a strongly progressive TM implementation $M$
to atomically store its \emph{pid} and identity i.e., $\ms{face}_i$ to t-object $X$ and returns the \emph{pid}
and identity of its \emph{predecessor}, say $[p_j,\ms{face}_j]$. Intuitively, this suggests that
$[p_i,\ms{face}_i]$ is scheduled to enter the critical section immediately after $]p_j,\ms{face}_j]$ exits the critical
section.
Note that if $p_i$ reads the initial value of t-object $X$, then it immediately enters the critical section.
Otherwise it writes \emph{locked} to the register $\ms{Lock}[p_i,p_j]$ and sets itself to be the successor of $[p_j,\ms{face}_j]$
by writing $p_i$ to $\ms{Succ}[p_j,\ms{face}_j]$.
Process $p_i$ now checks if $p_j$ has started the $\lit{Exit}$ operation by checking if $\ms{Done}[p_j,\ms{face}_j]$
is set. If it is, $p_i$ enters the critical section; otherwise $p_i$ spins on the register $\ms{Lock}[p_i][p_j]$
until it is \emph{unlocked}.

\vspace{1mm}\noindent\textbf{Exit operation.}
Process $p_i$ first indicates that it has exited the critical section by setting $\ms{Done}[p_i,\ms{face}_i]$, following
which it \emph{unlocks} the register $\ms{Lock}[\ms{Succ}[p_i,\ms{face}_i]][p_i]$ to allow $p_i$'s successor to
enter the critical section.
\subsection{Proof of correctness}
\begin{lemma}
\label{lm:mutex}
The implementation $L(M)$ (Algorithm~\ref{alg:mutex-dsm}) satisfies mutual exclusion.
\end{lemma}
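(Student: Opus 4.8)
The plan is to argue by contradiction: suppose some execution $\pi$ leaves two distinct processes $p_i$ and $p_j$ simultaneously in the critical section. Each of them returned $ok$ from its \lit{Entry} operation, which means each successfully executed the transaction inside \lit{func} (exited the \textbf{while} loop at line~\ref{line:while}) and then either read $\bot$ from $X$, or found its predecessor's $\ms{Done}$ bit set, or saw its $\ms{Lock}$ register unlocked. The key observation is that the transactions invoked by \lit{func} all operate on the single t-object $X$: each does a \lit{tx-read}(X) followed by a \lit{tx-write}(X,\cdot) and commits (a non-committing invocation returns $\false$ and the caller retries). First I would invoke strict serializability of $M$: the committed \lit{func}-transactions are totally ordered in a legal serialization $S$ consistent with real-time order, so they form a ``queue'' — the transaction of $p_i$ reads the value written by the transaction that immediately precedes it in $S$ (or reads $\bot$ if it is first). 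Thus each process that passes the while loop learns a well-defined predecessor identity $[p_k,\ms{face}_k]$, and distinct faces appear at distinct positions in this order (a face $[p_i,\ms{face}_i]$ is written to $X$ exactly once per \lit{Entry} call, since $\ms{face}_i$ flips each time).

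Next I would set up the core argument: among all processes currently in the critical section, consider the one, call it $p_i$, whose successful \lit{func}-transaction is \emph{earliest} in the serialization order $S$. I claim $p_i$'s predecessor $[p_k,\ms{face}_k]$ (if any) has already completed its \lit{Exit} before $p_i$ entered. Indeed $p_i$ entered either because $\ms{prev}=\bot$ (then $p_i$ is first in $S$ and there is no competing predecessor), or because it observed $\ms{Done}[\ms{prev}]=\true$ at line~\ref{line:entrydone2}, or because it observed $\ms{Lock}[p_i][\ms{prev}.pid]=\ms{unlocked}$; in the latter two cases the only writes of $\true$ to $\ms{Done}[p_k,\ms{face}_k]$ (line~\ref{line:exitdone}) and of $\ms{unlocked}$ to $\ms{Lock}[p_i][p_k]$ (line~\ref{line:exitunlock}) are performed by $[p_k,\ms{face}_k]$ during its \lit{Exit}, so that face has exited. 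Then I would argue that the other process $p_j$ in the critical section is blocked: $p_j$'s predecessor in $S$ is some face that is at or after $p_i$'s transaction in $S$, hence (by minimality of $p_i$) either is $p_i$'s current face itself or a face whose process is still in or before the critical section. I would trace the handshake on $\ms{Succ}$ and $\ms{Lock}$: when $p_j$ runs lines~\ref{line:entrylock}--\ref{line:entrydone2} it writes \ms{locked} to $\ms{Lock}[p_j][\ms{prev}.pid]$, then sets $\ms{Succ}[\ms{prev}]:=p_j$, then reads $\ms{Done}[\ms{prev}]$; and $\ms{prev}$'s face, not having completed \lit{Exit} (it is still in the critical section or earlier), either has $\ms{Done}[\ms{prev}]=\false$ already, or will unlock $\ms{Lock}[\ms{Succ}[\ms{prev}]][\ms{prev}.pid]$ only after it leaves. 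A careful case split on the interleaving of $p_j$'s write of $\ms{Succ}[\ms{prev}]$ versus $\ms{prev}$'s read of $\ms{Succ}$ in \lit{Exit}, and of $p_j$'s read of $\ms{Done}[\ms{prev}]$ versus $\ms{prev}$'s write of $\ms{Done}$, shows $p_j$ must still be spinning in the while loop at line~\ref{line:entryunlock} — contradicting that $p_j$ is in the critical section.

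The main obstacle I anticipate is exactly this last case analysis: the $\ms{Done}$/$\ms{Succ}$/$\ms{Lock}$ handshake is a classic source of subtle races (a ``doorway'' problem), and I must show that no matter how $p_j$'s steps at lines~\ref{line:entrylock}--\ref{line:entryunlock} interleave with $\ms{prev}$'s \lit{Exit} steps at lines~\ref{line:exitdone}--\ref{line:exitunlock}, $p_j$ does not escape the loop before $\ms{prev}$ finishes exiting. The decisive point is that $p_j$ writes $\ms{Lock}[p_j][\ms{prev}.pid]:=\ms{locked}$ \emph{before} writing itself into $\ms{Succ}[\ms{prev}]$ and \emph{before} reading $\ms{Done}[\ms{prev}]$: so if $\ms{prev}$ has not yet set $\ms{Done}$ when $p_j$ reads it, $p_j$ enters the spin with its lock already \ms{locked}, and $\ms{prev}$ — which reads $\ms{Succ}[\ms{prev}]$ only at line~\ref{line:exitunlock}, after $p_j$'s write — will see $p_j$ and unlock it only at exit; whereas if $\ms{prev}$ has already set $\ms{Done}$ when $p_j$ reads it, then by the ordering of $\ms{prev}$'s own steps $\ms{prev}$ is past line~\ref{line:exitdone}, but crucially we chose $p_i$ minimal so that this can only be the situation for $p_i$ itself (the unique candidate), not for a second process. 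I would also need the auxiliary fact that serialization order restricted to the faces written to $X$ is a total order with no ``gaps,'' which follows because every committed \lit{func}-transaction both reads and writes $X$, so in the legal t-sequential serialization each reads precisely the value written by its immediate predecessor — giving the linear predecessor chain the proof relies on.
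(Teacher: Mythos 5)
Your plan is correct and follows essentially the same route as the paper's own proof: a contradiction argument that uses the total order on committed \emph{func}-transactions on the single t-object $X$ (from strict serializability and strong progressiveness) to obtain a predecessor chain, and then a case analysis of the \emph{Done}/\emph{Succ}/\emph{Lock} handshake showing that a successor cannot enter the critical section before its predecessor has exited. Your device of picking the process whose transaction is earliest in the serialization is just a slightly more careful packaging of the paper's ``without loss of generality, consider an adjacent predecessor/successor pair'' step, so it is not a genuinely different argument.
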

\begin{proof}
Let $E$ be any execution of $L(M)$.
We say that $[p_i,\ms{face}_i]$ is the \emph{successor} of $[p_j,\ms{face}_j]$ if $p_i$ reads the value of $\ms{prev}$
in Line~\ref{line:while} to be 
$[p_j,\ms{face}_j]$ (and $[p_j,\ms{face}_j]$ is the \emph{predecessor} of $[p_i,\ms{face}_i]$); 
otherwise if $p_i$ reads the value to be $\bot$, we say that $p_i$ has no predecessor.

Suppose by contradiction that there exist processes $p_i$ and $p_j$ that are both inside the critical section after $E$.
Since $p_i$ is inside the critical section, either (1) 
$p_i$ read $\ms{prev}=\bot$ in Line~\ref{line:tm}, or
(2) $p_i$ read that $\ms{Done}[\ms{prev}]$ is $\true$ (Line~\ref{line:entrydone2}) or $p_i$ reads that
$\ms{Done}[\ms{prev}]$ is $\false$ and $\ms{Lock}[p_i][\ms{prev.pid}]$
is \emph{unlocked} (Line~\ref{line:entryunlock}).

(Case $1$) Suppose that $p_i$ read $\ms{prev}=\bot$ and entered the critical section. Since in this case, $p_i$ does
not have any predecessor, some other process that returns successfully from the \emph{while} loop in Line~\ref{line:while}
must be successor of $p_i$ in $E$. 
Since there exists $[p_j,\ms{face}_j]$ also inside the critical section after $E$, $p_j$ reads that either
$[p_i,\ms{face}_i]$ or some other process to be its predecessor. Observe that there must exist some such 
process $[p_k,\ms{face}_k]$ whose predecessor is $[p_i,\ms{face}_i]$. Hence, without loss of generality, we can assume that
$[p_j,\ms{face}_j]$ is the successor of $[p_i,\ms{face}_i]$.
By our assumption, $[p_j,\ms{face}_j]$ is also inside the critical section. Thus, $p_j$
\emph{locked} the register $\ms{Lock}[p_j,p_i]$ in Line~\ref{line:entrylock} and set itself to be $p_i$'s successor
in Line~\ref{line:entrysucc2}. Then, $p_j$ read that $\ms{Done}[p_i,\ms{face}_i]$ is $\true$
or read that $\ms{Done}[p_i,\ms{face}_i]$ is $\false$ and waited until $\ms{Lock}[p_j,p_i]$ is \emph{unlocked}
and then entered the critical section. But this is possible only if $p_i$ has left the critical section and 
updated the registers $\ms{Done}[p_i,\ms{face}_i]$ and $\ms{Lock}[p_j,p_i]$ in Lines~\ref{line:exitdone}
and \ref{line:exitunlock} respectively---contradiction to the assumption that $[p_i,\ms{face}_i]$ is also
inside the critical section after $E$.

(Case $2$) Suppose that $p_i$ did not read $\ms{prev}=\bot$ and entered the critical section.
Thus, $p_i$ read that $\ms{Done}[\ms{prev}]$ is $\false$ in Line~\ref{line:entrydone2} and $\ms{Lock}[p_i][\ms{prev.pid}]$
is \emph{unlocked} in Line~\ref{line:entryunlock}, where $\ms{prev}$ is the predecessor of $[p_i,\ms{face}_i]$.
As with case $1$, without loss of generality, we can assume that
$[p_j,\ms{face}_j]$ is the successor of $[p_i,\ms{face}_i]$ or 
$[p_j,\ms{face}_j]$ is the predecessor of $[p_i,\ms{face}_i]$.

Suppose that $[p_j,\ms{face}_j]$ is the predecessor of $[p_i,\ms{face}_i]$, \emph{i.e.}, $p_i$ writes the value
$[p_i,\ms{face}_i]$ to the register $\ms{Succ}[p_j,\ms{face}_j]$ in Line~\ref{line:entrysucc2}.
Since $[p_j,\ms{face}_j]$ is also inside the critical section after $E$,
process $p_i$ must read that $\ms{Done}[p_j,\ms{face}_j]$ is $\true$ in Line~\ref{line:entrydone2}
and $\ms{Lock}[p_i,p_j]$ is \emph{locked} in Line~\ref{line:entryunlock}.
But then $p_i$ could not have entered the critical section after $E$---contradiction.

Suppose that $[p_j,\ms{face}_j]$ is the successor of $[p_i,\ms{face}_i]$, \emph{i.e.}, $p_j$ writes the value
$[p_j,\ms{face}_j]$ to the register $\ms{Succ}[p_i,\ms{face}_i]$. Since both $p_i$ and $p_j$ are inside the critical section
after $E$, process $p_j$ must read that $\ms{Done}[p_i,\ms{face}_i]$ is $\true$ in Line~\ref{line:entrydone2}
and $\ms{Lock}[p_j,p_i]$ is \emph{locked} in Line~\ref{line:entryunlock}.
Thus, $p_j$ must spin on the register $\ms{Lock}[p_j,p_i]$, waiting for it to be \emph{unlocked} by $p_i$
before entering the critical section---contradiction to the assumption that both $p_i$ and $p_j$ are inside the critical section.

Thus, $L(M)$ satisfies mutual-exclusion.
\end{proof}
\begin{lemma}
\label{lm:dead}
The implementation $L(M)$ (Algorithm~\ref{alg:mutex-dsm}) provides deadlock-freedom.
\end{lemma}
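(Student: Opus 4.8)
The plan is to argue by contradiction. Suppose $\pi$ contains an invocation of $\lit{Enter}$ by $p_i$, and let $E=\pi\cdot\sigma$ be an extension in which every process takes infinitely many steps but, contrary to the claim, no process is ever in the critical section at any configuration of $E$. Two structural observations set up the argument. First, every step of $\lit{Entry}$ and $\lit{Exit}$ other than the transactional while loop at Line~\ref{line:tm} and the busy-wait at Line~\ref{line:entryunlock} is wait-free --- a bounded number of register accesses --- so a process that never leaves $\lit{Entry}$ is either invoking $\lit{func}$ repeatedly at Line~\ref{line:tm} or spinning at Line~\ref{line:entryunlock}. Second, the only transactions $L(M)$ ever runs are the $\lit{func}$ transactions, each of which invokes $\lit{tx-read}(X)$ and $\lit{tx-write}(X,\cdot)$ and hence has data set exactly $\{X\}$; consequently, for the TM history $H$ of $E$ we have $\ms{txns}(H)\in\ms{CTrans}(H)$ and $|CObj_H(\ms{txns}(H))|\le 1$, so Definition~\ref{def:sprog} (strong progressiveness of $M$) applies to $\ms{txns}(H)$ taken as a single conflict class.

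The core step is to produce a process that must be in the critical section. Since $p_i$ invokes $\lit{Enter}$ in $\pi$, it reaches Line~\ref{line:tm} and invokes $\lit{func}$, so $\ms{txns}(H)\neq\emptyset$; by strong progressiveness some transaction in $\ms{txns}(H)$ is not aborted, and since (as is standard for strongly progressive TMs) every $\lit{func}$ invocation returns, no transaction remains t-incomplete in the infinite run, so some $\lit{func}$ transaction actually commits in $E$. Let $S$ be a serialization witnessing strict serializability of $H$, let $T$ be the first committed transaction in $S$, and let $q$ be the process running $T$. Since $X$ is initially $\bot$ and only $\lit{func}$ transactions ever write to $X$, and $T$ precedes every other committed transaction in $S$, legality of $S$ forces the $\lit{tx-read}(X)$ of $T$ to return the initial value $\bot$; hence $q$'s invocation of $\lit{func}$ returns $\bot$, the loop at Line~\ref{line:tm} exits with $\ms{prev}=\bot$, the test $\ms{prev}\neq\bot$ fails, and $q$'s $\lit{Enter}$ returns $ok$. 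Because $q$ takes infinitely many steps in $E$, it reaches the configuration at which this response occurs, and at that configuration $q$ is in the critical section --- the desired contradiction.

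The argument above already establishes the lemma as stated (it exhibits a configuration of $E$ with a process in the critical section); for the stronger, and arguably intended, guarantee that progress keeps flowing --- every time the critical section is vacated some waiting process re-enters it --- I would append a cascade down the implicit queue. The committed $\lit{func}$ transactions are linearly ordered by $S$, each reading the value its $S$-predecessor wrote, which realizes the predecessor/successor links set by Lines~\ref{line:entrysucc2} and \ref{line:while}; since $\lit{Exit}$ is the two wait-free writes at Lines~\ref{line:exitdone} and \ref{line:exitunlock}, the head $q$ completes its $\lit{Exit}$, and then, by the same ordering analysis of the $\ms{Done}$, $\ms{Succ}$ and $\ms{Lock}$ cells used in the proof of Lemma~\ref{lm:mutex}, $q$'s successor either reads $\ms{Done}[q,\cdot]=\true$ and returns $ok$ or, having already performed Lines~\ref{line:entrylock} and \ref{line:entrysucc2}, is released by $q$'s write at Line~\ref{line:exitunlock}; iterating this along the finite chain, every process on it eventually returns $ok$. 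The main obstacle is making this cascade precise: it is the liveness counterpart of the safety reasoning in Lemma~\ref{lm:mutex} and depends on each cell $\ms{Succ}[p,\ms{face}]$, $\ms{Done}[p,\ms{face}]$ and $\ms{Lock}[p][p']$ being written by a single process in a fixed program order, together with uniqueness of the $S$-predecessor and $S$-successor of each committed $\lit{func}$; keeping the quantifiers straight across re-entering processes under alternating $\ms{face}$ values is where the bookkeeping is heaviest. A lesser caveat is the appeal to strong progressiveness --- one must verify that $L(M)$'s transactions genuinely form one conflict class over $\{X\}$ and that $M$'s transactional operations always return, so that ``not aborted'' can be upgraded to ``committed''.
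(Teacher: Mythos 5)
Your core argument is correct as far as it goes, and it takes a genuinely different route from the paper's: instead of chasing the chain of predecessors among spinning processes, you extract the first committed transaction $T$ in a strict serialization $S$, observe that legality forces its $\lit{tx-read}(X)$ to return the initial value $\bot$, and conclude that $T$'s process exits the loop at Line~\ref{line:tm} with $\ms{prev}=\bot$ and enters the critical section without ever reaching the spin at Line~\ref{line:entryunlock}. That is a clean way to put the head of the queue into the critical section, and it shares with the paper's own proof the same unexamined leap from strong progressiveness (``some transaction is not aborted'') to ``some $\lit{func}$ invocation returns a non-$\false$ value,'' so I do not count that against you.

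The genuine gap is that this establishes only the degenerate reading of the definition: it exhibits \emph{some} configuration of $E$ at which \emph{some} process is in the critical section, and that configuration may lie entirely inside $\pi$, long before $p_i$ invoked $\lit{Enter}$. The property that the reduction in Theorem~\ref{th:mutex-tm} actually needs -- and the property the paper's proof is really about -- is that progress continues: a process spinning at Line~\ref{line:entryunlock} must eventually be released, which requires showing that its predecessor eventually reaches Line~\ref{line:exitunlock}, hence eventually enters and leaves the critical section itself. This is precisely the part you defer to the ``cascade'' sketch in your last paragraph: one must argue that the predecessor chain induced by the committed $\lit{func}$ transactions is finite and that its head is not itself blocked, handle the race in which a predecessor exits and re-enters under a new face before its successor tests $\ms{Done}$ (the paper's Case~2), and then induct along the chain. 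Your serialization observation supplies a nice base case for that induction, but the inductive step -- the only part of the lemma with real content -- is left as an acknowledged to-do, so the proof as submitted does not yet establish deadlock-freedom in the sense the rest of the paper relies on.
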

\begin{proof}
Let $E$ be any execution of $L(M)$.
Observe that a process may be stuck indefinitely only in Lines~\ref{line:tm} and \ref{line:entryunlock} as it performs
the \emph{while} loop.

Since $M$ is strongly progressive, in every execution $E$ that contains an invocation of $\lit{Enter}$
by process $p_i$, some process returns $\true$ from the invocation of $\ms{func}()$ in Line~\ref{line:tm}.

Now consider a process $p_i$ that returns successfuly from the \emph{while} loop in Line~\ref{line:tm}.
Suppose that $p_i$ is stuck indefinitely as it performs the \emph{while} loop in Line~\ref{line:entryunlock}.
Thus, no process has \emph{unlocked} the register $\ms{Lock}[p_i][\ms{prev.pid}]$ by writing to it in the $\lit{Exit}$ section.
Recall that since $[p_i,\ms{face}_i]$ has reached the \emph{while} loop in Line~\ref{line:entryunlock}, $[p_i,\ms{face}_i]$
necessarily has a predecessor, say $[p_j,\ms{face}_j]$, and has set itself to be $p_j$'s successor by writing
$p_i$ to register $\ms{Succ}[p_j,\ms{face}_j]$ in Line~\ref{line:entrysucc2}.
Consider the possible two cases: the predecessor of $[p_j,\ms{face}_j$ is some process $p_k$;$k \neq i$ or
the predecessor of $[p_j,\ms{face}_j$ is the process $p_i$ itself.

(Case $1$) Since by assumption, process $p_j$ takes infinitely many steps in $E$, the only reason that
$p_j$ is stuck without entering the critical section is that $[p_k,\ms{face}_k]$ is also stuck in
the \emph{while} loop in Line~\ref{line:entryunlock}. Note that it is possible for us to iteratively extend this execution
in which $p_k$'s predecessor is a process that is not $p_i$ or $p_j$ that is also stuck in the
\emph{while} loop in Line~\ref{line:entryunlock}. But then the last such process must eventually
read the corresponding $\ms{Lock}$ to be \emph{unlocked} and enter the critical section.
Thus, in every extension of $E$ in which every process takes infinitely many steps, some process will enter the critical section.

(Case $2$) Suppose that the predecessor of $[p_j,\ms{face}_j$ is the process $p_i$ itself.
Thus, as $[p_i,\ms{face}_]$ is stuck in the \emph{while} loop waiting for $\ms{Lock}[p_i,p_j]$ to be \emph{unlocked}
by process $p_j$, $p_j$ leaves the critical section, \emph{unlocks} $\ms{Lock}[p_i,p_j]$ in Line~\ref{line:exitunlock}
and prior to the read of $\ms{Lock}[p_i,p_j]$, $p_j$ re-starts the $\lit{Entry}$ operation,
writes $\false$ to $\ms{Done}[p_j,1-\ms{face}_j]$ and sets itself to be the successor of $[p_i,\ms{face}_i]$
and spins on the register $\ms{Lock}[p_j,p_i]$. However, observe that process $p_i$, which takes infinitely many steps by our assumption
must eventually read that $\ms{Lock}[p_i,p_j]$ is \emph{unlocked} and enter the critical section, thus establishing
deadlock-freedom.
\end{proof}
We say that a TM implementation $M$ \emph{accesses a single t-object} if in every execution $E$ of $M$ and
every transaction $T\in \ms{txns}(E)$, $|\Dset(T)| \leq 1$. We can now prove the following theorem:
\begin{theorem}
\label{th:mutex-tm}
Any strictly serializable, strongly progressive TM implementation $M$ that accesses a single t-object 
implies a \emph{deadlock-free}, 
\emph{finite exit} mutual exclusion
implementation $L(M)$ such that the RMR complexity of $M$ is within a constant factor of the RMR complexity of $L(M)$.
\end{theorem}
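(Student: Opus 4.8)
The plan is to build on what has already been established about $L(M)$ and to reduce the theorem to a line\nobreakdash-by\nobreakdash-line RMR accounting of Algorithm~\ref{alg:mutex-dsm}. Lemma~\ref{lm:mutex} and Lemma~\ref{lm:dead} already give that $L(M)$ satisfies mutual exclusion and deadlock\nobreakdash-freedom, so only two things remain: the \emph{finite\nobreakdash-exit} property, and the bound relating the RMR complexities of $M$ and $L(M)$. Finite\nobreakdash-exit is immediate by inspection: the $\lit{Exit}$ operation consists of the single write to $\ms{Done}[p_i,\ms{face}_i]$ (Line~\ref{line:exitdone}), the single write to $\ms{Lock}[\ms{Succ}[p_i,\ms{face}_i]][p_i]$ (Line~\ref{line:exitunlock}), and a return; it contains no loop, so it completes in a constant number of steps (and, in particular, incurs $O(1)$ RMRs in both the CC and the DSM model). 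Note also that every execution of $L(M)$ drives $M$ on transactions that read and write the single t\nobreakdash-object $X$, matching the hypothesis that $M$ accesses a single t\nobreakdash-object.

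For the RMR relationship, one direction is free: $L(M)$ invokes $M$ only as a black box inside $\ms{func}()$, so every RMR performed by a transactional read or write is an RMR of $M$, and hence the RMR cost of $L(M)$ is at least that of the induced execution of $M$. The substantive direction is to show that a single passage of process $p_i$ through $\lit{Enter}$ or $\lit{Exit}$ incurs only $O(1)$ RMRs \emph{beyond} those performed inside the calls to $\ms{func}()$ in the loop of Line~\ref{line:tm}. That loop itself contributes no overhead: each iteration is one transaction on $X$ (a $\lit{tx-read}$ followed by a $\lit{tx-write}$), the ``no op'' body is RMR\nobreakdash-free, and whether the transaction commits or aborts all of its RMRs are already counted in $M$'s cost; the loop terminates by strong progressiveness, exactly as argued in Lemma~\ref{lm:dead}. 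The remaining steps of $\lit{Enter}$ are the writes to $\ms{Done}[p_i,\ms{face}_i]$, $\ms{Succ}[p_i,\ms{face}_i]$, $\ms{Lock}[p_i][p_j]$ and $\ms{Succ}[p_j,\ms{face}_j]$ (Lines~\ref{line:entrydone1}, \ref{line:entrysucc1}, \ref{line:entrylock}, \ref{line:entrysucc2}) and the read of $\ms{Done}[p_j,\ms{face}_j]$ (Line~\ref{line:entrydone2}), where $[p_j,\ms{face}_j]$ is the predecessor returned in Line~\ref{line:while}; each of these is a single shared\nobreakdash-memory access, hence at most one RMR. The only loop with a potentially unbounded number of reads is the spin on $\ms{Lock}[p_i][p_j]$ in Line~\ref{line:entryunlock}, and here I would argue $O(1)$ RMRs in both models: in DSM, $\ms{Lock}[p_i][p_j]$ is assigned to $p_i$ and is therefore local, so the spin is RMR\nobreakdash-free; in CC, the only writers of $\ms{Lock}[p_i][p_j]$ are $p_i$ itself (which sets it to $\ms{locked}$ once, in Line~\ref{line:entrylock}) and the predecessor $p_j$ (which sets it to $\ms{unlocked}$ at most once during this passage, in Line~\ref{line:exitunlock}, because $p_i$ becomes the successor of a given $[p_j,\ms{face}_j]$ at most once), so the register changes value at most twice during the spin and $p_i$ incurs $O(1)$ RMRs.

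Putting the two directions together, for any execution of $L(M)$ in which $n$ processes each perform $O(1)$ $\lit{Enter}/\lit{Exit}$ passages, the RMR cost of $L(M)$ is at most $c\cdot\mathrm{RMR}(M)+O(n)$ for a constant $c$, while it is also at least $\mathrm{RMR}(M)$; thus the two complexities agree up to a constant factor once $\mathrm{RMR}(M)=\Omega(n)$, which is precisely the regime of interest, so the $\Omega(n\log n)$ mutual\nobreakdash-exclusion lower bound of~\cite{rmr-mutex} carries over to $M$. The step I expect to be the main obstacle is the spin\nobreakdash-loop accounting in the CC model: one must be careful that the alternating\nobreakdash-identity mechanism ($\ms{face}_i\in\{0,1\}$) genuinely prevents any register $\ms{Lock}[p_i][p_j]$ (and, along the way, $\ms{Succ}[p_i,\ms{face}_i]$ and $\ms{Done}[p_i,\ms{face}_i]$, which are subject to writer/reader races) from being touched an unbounded number of times within a single passage, so that no spin or re\nobreakdash-read inflates the RMR count; once that invariant is nailed down, the rest is routine bookkeeping combined with Lemmas~\ref{lm:mutex} and~\ref{lm:dead}.
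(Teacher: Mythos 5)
Your proposal is correct and follows essentially the same route as the paper: mutual exclusion and deadlock-freedom are imported from Lemmas~\ref{lm:mutex} and~\ref{lm:dead}, finite-exit is read off the loop-free $\lit{Exit}$ code, and the RMR bound comes from observing that every step of $\lit{Enter}$/$\lit{Exit}$ outside the calls to $M$ costs $O(1)$ RMRs, with the spin on $\ms{Lock}[p_i][p_j]$ being local in DSM and incurring $O(1)$ RMRs in CC because only the predecessor's single unlock write can invalidate the cached copy. Your explicit accounting of the overhead (at most $O(1)$ extra RMRs per passage, hence the $\Omega(n\log n)$ bound transfers) just spells out what the paper leaves implicit.
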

\begin{proof}
(Mutual-exclusion)
Follows from Lemma~\ref{lm:mutex}.

(Finite-exit)
The proof is immediate since the $\lit{Exit}$ operation contains no unbounded
loops or waiting statements.

(Deadlock-freedom)
Follows from Lemma~\ref{lm:dead}.

(RMR complexity)
First, let us consider the CC model. Observe that every event not on $M$ performed by a process $p_i$ 
as it performs the $\lit{Entry}$
or $\lit{Exit}$ operations incurs $O(1)$ RMR cost clearly, possibly barring the \emph{while} loop executed in
Line~\ref{line:entryunlock}. During the execution of this \emph{while} loop, process $p_i$ spins on the register
$\ms{Lock}[p_i][p_j]$, where $p_j$ is the predecessor of $p_i$.  Observe that $p_i$'s cached copy of $\ms{Lock}[p_i][p_j]$
may be invalidated only by process $p_j$ as it \emph{unlocks} the register in Line~\ref{line:exitunlock}.
Since no other process may write to this register and $p_i$ terminates the \emph{while} loop
immediately after the write to $\ms{Lock}[p_i][p_j]$ by $p_j$, $p_i$ incurs $O(1)$ RMR's.
Thus, the overall RMR cost incurred by $M$ is within a constant factor of the RMR cost of $L(M)$.

Now we consider the DSM model. As with the reasoning for the CC model, every event not on $M$ performed by a process $p_i$
as it performs the $\lit{Entry}$
or $\lit{Exit}$ operations incurs $O(1)$ RMR cost clearly, possibly barring the \emph{while} loop executed in
Line~\ref{line:entryunlock}. During the execution of this \emph{while} loop, process $p_i$ spins on the register
$\ms{Lock}[p_i][p_j]$, where $p_j$ is the predecessor of $p_i$. Recall that $\ms{Lock}[p_i][p_j]$
is a register that is local to $p_i$ and thus, $p_i$ does not incur any RMR cost on account of executing this loop.
It follows that $p_i$ incurs $O(1)$ RMR cost in the DSM model. Thus, the overall RMR cost of $M$ is within
a constant factor of the RMR cost of $L(M)$ in the DSM model.
\end{proof}
\begin{theorem}
\label{th:mutex-rmr}
(\cite{rmr-mutex})
Any deadlock-free, finite-exit mutual exclusion implementation from read, write and 
conditional primitives has an execution whose RMR complexity is $\Omega(n \log{n})$.
\end{theorem}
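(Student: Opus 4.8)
\noindent\emph{Proof sketch (plan).} This is the lower bound of Attiya, Hendler and Woelfel~\cite{rmr-mutex}, so my plan is to recall the structure of their adversary argument rather than to invent a new proof. The argument builds a \emph{single} execution incrementally, and its heart is an \emph{indistinguishability} argument: if in the constructed execution fewer than $\Omega(n\log n)$ RMRs had occurred, the adversary could continue it so that two processes end up simultaneously in the critical section, contradicting \emph{mutual exclusion}. First I would fix the invariant of the construction: the adversary maintains a set $A$ of \emph{active} processes, all still running their entry protocols, together with a partition of $A$ into \emph{groups} such that the execution restricted to any one group is indistinguishable, to that group, from a solo execution of that group --- equivalently, distinct groups have so far touched disjoint sets of base objects with their nontrivial steps. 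Initially $|A|=n$ (one group, or $n$ singleton groups, depending on the variant). Deadlock-freedom is invoked at the very end to drive surviving processes into the critical section, and \emph{finite-exit} lets stalled processes be drained without additional RMRs when consistency is checked.

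\noindent\emph{Rounds.} I would extend the execution in $\Theta(\log n)$ rounds. In each round every active process is advanced by one step; let $b$ be the base object it is poised to access. Processes poised on pairwise distinct objects keep the obliviousness invariant automatically. When $k'\ge 2$ active processes are poised on the same object $b$, I would exploit that each such step is a \emph{read}, a \emph{write}, or a \emph{conditional} RMW primitive: the adversary picks an order in which to schedule these $k'$ steps so that afterwards at most a constant number of the $k'$ processes carry ``new'' information about $b$ to the rest, while the others can be kept in states consistent with a solo run; crucially, each of the $k'$ processes is charged at least one RMR for this step --- in the CC models because colliding writes invalidate cached copies of $b$, and because a process whose conditional primitive is ``overtaken'' must re-read $b$ and incurs a fresh RMR; in the DSM model because $b$ is remote to all but its owner. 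This is precisely where the restriction to \emph{conditional} primitives is essential: a non-conditional RMW such as \emph{fetch-and-add} would let all $k'$ processes learn their relative rank from $b$ in a single step, conveying $\Theta(\log k')$ bits per access and collapsing the bound --- consistently with the existence of $O(1)$-RMR queue locks built from \emph{fetch-and-store}. Processes removed from $A$ in a round are parked so that their pending step cannot be observed by anyone left in $A$, which re-establishes the invariant for the smaller active set.

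\noindent\emph{Counting, and the main obstacle.} The naive ``halve $A$ each round'' accounting gives only $\Omega(n)$ total RMRs, since $\sum_i|A_i|=\Theta(n)$; the $\Omega(n\log n)$ bound instead comes from a charging/potential argument showing that the adversary can keep $\Omega(n)$ processes active for $\Omega(\log n)$ rounds at a cost of $\Omega(1)$ RMRs per active process per round --- intuitively, a process may enter the critical section safely only after it has distinguished itself, level by level over $\Omega(\log n)$ levels, from every process it was ever grouped with, and each level costs it a remote access. Once the rounds are exhausted, deadlock-freedom forces a remaining process into the critical section; the obliviousness invariant then shows that a second, parked process can also be scheduled in --- unless the ``missing'' RMRs were in fact incurred --- which is the contradiction that pins the count at $\Omega(n\log n)$. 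I expect the main obstacle to be exactly this bookkeeping: keeping the group-obliviousness invariant intact across rounds when the steps are conditional RMWs, whose effect on a base object is history-dependent, so ``steering'' a deactivated process requires a careful choice of both the intra-round scheduling order and, for conditionals, the operand values; and calibrating the potential so that it drops by only $O(1)$ per RMR while its total drop over the whole construction is $\Omega(n\log n)$. Throughout, the construction must be arranged so that every charged access is an RMR under all three cost models --- write-through CC, write-back CC, and DSM --- simultaneously, which yields the stated model-independent bound.
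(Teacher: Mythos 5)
The paper does not prove this statement at all: Theorem~\ref{th:mutex-rmr} is imported verbatim from~\cite{rmr-mutex} (Attiya, Hendler and Woelfel) and is used as a black box in the reduction of Section~\ref{sec:rmr}, so there is no in-paper proof to compare your attempt against. Your submission correctly recognizes this and recalls the broad architecture of the cited argument --- an adversarially constructed single execution, an invariant that groups of active processes remain mutually ``invisible'' (indistinguishable from solo or group-solo runs), rounds in which every active process takes a step and is charged an RMR, elimination of processes that would leak information, and a final appeal to deadlock-freedom plus indistinguishability to contradict mutual exclusion. You also correctly identify why the restriction to read, write and conditional primitives is essential (fetch-and-add or fetch-and-store would convey too much information per access and admit $O(1)$-RMR locks), and why finite exit is needed to drain parked processes.

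As a proof, however, what you have written is a plan with the decisive step missing, and you say so yourself: the naive round-halving accounting you describe yields only $\Omega(n)$, and the entire content of the theorem is the charging/potential argument showing the adversary can keep $\Omega(n)$ processes active and individually charged for $\Omega(\log n)$ rounds while preserving the invisibility invariant --- in particular, controlling how many processes per contended base object must be eliminated per round, and showing that the history-dependent effects of conditional RMWs can be ``steered'' so that eliminated processes never become visible to survivors in any of the three cost models simultaneously. Gesturing at ``a process must distinguish itself level by level over $\Omega(\log n)$ levels'' names the intuition but does not establish it; that bookkeeping is exactly where the difficulty of~\cite{rmr-mutex} lies. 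So the proposal is an accurate summary of the known proof's approach, but it does not constitute an independent proof of the theorem; in the context of this paper the honest course is what the authors do --- cite the result --- or else reproduce the full adversary construction and potential analysis, which your sketch leaves open.
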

Theorems~\ref{th:mutex-rmr} and \ref{th:mutex-tm} imply:
\begin{theorem}
\label{th:mutex-tm-rmr}
Any strictly serializable, strongly progressive TM implementation from read, write and 
conditional primitives that accesses a single t-object has an execution whose RMR complexity is $\Omega(n \log{n})$.
\end{theorem}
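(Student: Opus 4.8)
The plan is to derive Theorem~\ref{th:mutex-tm-rmr} as an immediate corollary of the two preceding theorems, exactly as the sentence ``Theorems~\ref{th:mutex-rmr} and \ref{th:mutex-tm} imply:'' suggests. So the proof is essentially a one-paragraph chaining argument: start from an arbitrary strictly serializable, strongly progressive TM implementation $M$ that uses only read, write and conditional primitives and accesses a single t-object; apply Theorem~\ref{th:mutex-tm} to obtain the mutual exclusion implementation $L(M)$, which is deadlock-free and finite-exit, and whose RMR complexity is within a constant factor of that of $M$ (in both the CC and DSM models); then observe that, since Algorithm~\ref{alg:mutex-dsm} uses $M$ together with only read/write registers, $L(M)$ is itself an implementation from read, write and conditional primitives, so Theorem~\ref{th:mutex-rmr} applies and gives an execution of $L(M)$ with RMR complexity $\Omega(n\log n)$; finally, translate this back through the constant-factor relationship to conclude that $M$ has an execution with RMR complexity $\Omega(n\log n)$.

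The one point that needs a little care — and which I expect to be the main (though still minor) obstacle — is making the ``RMR complexity within a constant factor'' step run in the right direction. Theorem~\ref{th:mutex-tm} is phrased symmetrically (``within a constant factor''), but for the lower bound we specifically need: given an execution $\pi$ of $L(M)$ that is expensive in RMRs, we must exhibit a corresponding execution of $M$ whose RMR count is at least a constant fraction of that of $\pi$. The content of the proof of Theorem~\ref{th:mutex-tm} already establishes this: every step of $L(M)$ that is \emph{not} a step of $M$ contributes only $O(1)$ RMRs per invocation of \lit{Entry}/\lit{Exit} (the only loops are the TM-progress loop in Line~\ref{line:tm}, whose body is handled inside $M$, and the spin loop in Line~\ref{line:entryunlock}, which incurs $O(1)$ RMRs in CC because the awaited register is written only once by the unique predecessor, and $0$ RMRs in DSM because that register is local). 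Hence if $\pi$ incurs $\Omega(n\log n)$ RMRs over $n$ processes each executing a bounded number of \lit{Enter}/\lit{Exit} calls, then the sub-execution of $M$ embedded in $\pi$ must account for $\Omega(n\log n)$ of them, and this sub-execution is a genuine execution of $M$ in which $n$ processes run transactions on the single t-object $X$.

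Concretely I would write: \emph{Proof.} Let $M$ be any strictly serializable, strongly progressive TM implementation that accesses a single t-object and uses only read, write and conditional RMW primitives. By Theorem~\ref{th:mutex-tm}, $L(M)$ (Algorithm~\ref{alg:mutex-dsm}) is a deadlock-free, finite-exit mutual exclusion implementation, and since $L(M)$ accesses only $M$ and read/write registers, it is an implementation from read, write and conditional primitives. By Theorem~\ref{th:mutex-rmr}, $L(M)$ has an execution $\pi$ whose RMR complexity is $\Omega(n\log n)$, in either the CC or the DSM model. By the RMR-complexity part of Theorem~\ref{th:mutex-tm}, the RMRs incurred in $\pi$ by steps of $L(M)$ that do not belong to $M$ total only $O(n)$ (a constant per \lit{Enter}/\lit{Exit} invocation, with a bounded number of such invocations per process in the execution witnessing the lower bound of Theorem~\ref{th:mutex-rmr}); therefore the steps of $\pi$ that are steps of $M$ already incur $\Omega(n\log n)$ RMRs. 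Restricting $\pi$ to these steps yields an execution of $M$ in which $n$ concurrent processes perform transactions on the single t-object $X$ and whose RMR complexity is $\Omega(n\log n)$, as claimed. \hfill$\qed$

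If one wants the statement to read more robustly, I would additionally remark that the same argument applies to \emph{strongly progressive} TMs accessing a single t-object in general, since the reduction only uses strong progressiveness to guarantee that some invocation of \lit{func}() commits (Lemma~\ref{lm:dead}) and strict serializability to guarantee the queue semantics of $X$ (Lemma~\ref{lm:mutex}); no structural restriction beyond ``single t-object'' is needed for the RMR accounting. This makes clear that the theorem is tight in the sense that it already bites for the minimal workload of transactions touching just one data item, which is the headline contribution claimed in the introduction.
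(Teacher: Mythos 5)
Your proposal is correct and matches the paper, which derives Theorem~\ref{th:mutex-tm-rmr} purely by chaining Theorem~\ref{th:mutex-tm} with Theorem~\ref{th:mutex-rmr} (the paper offers no further written proof beyond that implication). Your extra care about the direction of the constant-factor bound --- charging only $O(1)$ RMRs per \lit{Enter}/\lit{Exit} to the non-$M$ steps so that the embedded execution of $M$ retains the $\Omega(n\log n)$ cost --- is a faithful elaboration of the RMR-complexity argument already contained in the proof of Theorem~\ref{th:mutex-tm}, not a different route.
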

%
%
\section{Related work and concluding remarks}
\label{sec:related}
Theorem~\ref{th:iclb} improves the read-validation step-complexity lower bound~\cite{GK09-progressiveness,tm-book} derived for \emph{strict-data
partitioning} (a very strong version of DAP) and (strong) invisible reads.
In a \emph{strict data partitioned} TM, the set of base objects used by the TM is split into 
disjoint sets, each storing information only about a single data item.
Indeed, every TM implementation that is strict data-partitioned satisfies weak DAP, but not vice-versa.
The definition of invisible reads assumed in \cite{GK09-progressiveness,tm-book}
requires that a t-read operation does not apply nontrivial events in any execution.
Theorem~\ref{th:iclb} however, assumes \emph{weak}
invisible reads, stipulating that t-read operations of a transaction $T$ do not apply nontrivial events only when $T$
is not concurrent with any other transaction.
%

The notion of weak DAP used in this paper was introduced by Attiya
\emph{et al.}~\cite{AHM09}.  

\ignore{
They proved the impossibility of implementing weak DAP strict serializable
TMs that use invisible reads and guarantee that read-only transactions eventually commit, while 
updating transactions are guaranteed to commit only when they are scheduled
to run sequentially~\cite{AHM09}.
Fan \emph{et al.}, inspired by the result in \cite{AHM09}, proved the impossibility of
implementing weak DAP strict serializable TMs that provide \emph{mv-permissive}
TM-progress: only updating transactions may
be aborted, and only when they conflict with another updating transaction~\cite{PFK10}.
The class of TMs considered in \cite{AHM09, PFK10} is incomparable to the class of progressive TMs
and thus, the impossibility results do not apply to us.

Guerraoui and Kapalka~\cite{tm-book} proved that it is impossible to implement
\emph{strict DAP} \emph{obstruction-free} TMs. Strict DAP is stronger than weak DAP since it requires
that any two transactions may contend on a base object only if they access a common data item.
Obstruction-freedom is a non-blocking TM-progress condition that ensures that a transaction must commit if it runs
in the absence of \emph{step contention}. It is incomparable to progressiveness:
a read of a t-object $X$ by a transaction $T$ that runs step contention-free
after an incomplete write to $X$ is typically \emph{blocked} or aborted in lock-based TMs; 
obstruction-free TMs however, must ensure that $T$ must complete its read of $X$ without blocking or aborting in such executions.
On the other hand, progressiveness requires two transactions that do not conflict on any data item requires to commit even
if the execution is not step contention-free.
}

Proving a lower bound for a concurrent object by reduction to a form of mutual exclusion has previously been used in~\cite{AlistarhAGG11,tm-book}.
Guerraoui and Kapalka~\cite{tm-book} proved that it is impossible to implement strictly serializable strongly progressive TMs
that provide \emph{wait-free} TM-liveness (every t-operation returns a matching response within a finite number of steps)
using only read and write primitives. 
Alistarh \emph{et al.} proved a lower bound
on RMR complexity of \emph{renaming} problem~\cite{AlistarhAGG11}.
Our reduction algorithm (Section~\ref{sec:rmr}) is inspired by the
$O(1)$ RMR mutual exclusion algorithm by Lee~\cite{lee-thesis}.

To the best of our knowledge, the TM properties assumed for Theorem~\ref{th:iclb}
cover all of the TM implementations that are subject
to the validation step-complexity~\cite{HLM+03,norec,DSS06}.

It is easy to see that the lower bound of Theorem~\ref{th:iclb} is tight for both strict serializability and opacity.
We refer to the TM implementation in~\cite{KR11-TR} or  \emph{DSTM}~\cite{HLM+03} for
the matching upper bound. 

Finally, we conjecture that the lower bound of Theorem~\ref{th:mutex-tm-rmr} is tight. Proving this remains an interesting open
question.
%

%
%
%
\bibliography{references}

\def\noopsort#1{} \def\No{\kern-.25em\lower.2ex\hbox{\char'27}}
  \def\no#1{\relax} \def\http#1{{\\{\small\tt
  http://www-litp.ibp.fr:80/{$\sim$}#1}}}
\begin{thebibliography}{10}

\bibitem{AlistarhAGG11}
D.~Alistarh, J.~Aspnes, S.~Gilbert, and R.~Guerraoui.
\newblock The complexity of renaming.
\newblock In {\em {IEEE} 52nd Annual Symposium on Foundations of Computer
  Science, {FOCS} 2011, Palm Springs, CA, USA, October 22-25, 2011}, pages
  718--727, 2011.

\bibitem{anderson-90-tpds}
T.~E. Anderson.
\newblock The performance of spin lock alternatives for shared-memory
  multiprocessors.
\newblock {\em IEEE Trans. Parallel Distrib. Syst.}, 1(1):6--16, 1990.

\bibitem{rmr-mutex}
H.~Attiya, D.~Hendler, and P.~Woelfel.
\newblock Tight rmr lower bounds for mutual exclusion and other problems.
\newblock In {\em Proceedings of the Twenty-seventh ACM Symposium on Principles
  of Distributed Computing}, PODC '08, pages 447--447, New York, NY, USA, 2008.
  ACM.

\bibitem{AH13-inv}
H.~Attiya and E.~Hillel.
\newblock The cost of privatization in software transactional memory.
\newblock {\em IEEE Trans. Computers}, 62(12):2531--2543, 2013.

\bibitem{AHM09}
H.~Attiya, E.~Hillel, and A.~Milani.
\newblock Inherent limitations on disjoint-access parallel implementations of
  transactional memory.
\newblock {\em Theory of Computing Systems}, 49(4):698--719, 2011.

\bibitem{norec}
L.~Dalessandro, M.~F. Spear, and M.~L. Scott.
\newblock Norec: Streamlining stm by abolishing ownership records.
\newblock {\em SIGPLAN Not.}, 45(5):67--78, Jan. 2010.

\bibitem{DSS06}
D.~Dice, O.~Shalev, and N.~Shavit.
\newblock Transactional locking ii.
\newblock In {\em Proceedings of the 20th International Conference on
  Distributed Computing}, DISC'06, pages 194--208, Berlin, Heidelberg, 2006.
  Springer-Verlag.

\bibitem{DStransaction06}
D.~Dice and N.~Shavit.
\newblock What really makes transactions fast?
\newblock In {\em Transact}, 2006.

\bibitem{DS10-rwlock}
D.~Dice and N.~Shavit.
\newblock {TLRW:} return of the read-write lock.
\newblock In {\em {SPAA}}, pages 284--293, 2010.

\bibitem{G05}
F.~Ellen, D.~Hendler, and N.~Shavit.
\newblock On the inherent sequentiality of concurrent objects.
\newblock {\em SIAM J. Comput.}, 41(3):519--536, 2012.

\bibitem{cond-04}
F.~Fich, D.~Hendler, and N.~Shavit.
\newblock On the inherent weakness of conditional synchronization primitives.
\newblock In {\em Proceedings of the Twenty-third Annual ACM Symposium on
  Principles of Distributed Computing}, PODC '04, pages 80--87, New York, NY,
  USA, 2004. ACM.

\bibitem{fraser}
K.~Fraser.
\newblock Practical lock-freedom.
\newblock Technical report, Cambridge University Computer Laborotory, 2003.

\bibitem{GK09-progressiveness}
R.~Guerraoui and M.~Kapalka.
\newblock The semantics of progress in lock-based transactional memory.
\newblock {\em SIGPLAN Not.}, 44(1):404--415, Jan. 2009.

\bibitem{tm-book}
R.~Guerraoui and M.~Kapalka.
\newblock {\em Principles of Transactional Memory,Synthesis Lectures on
  Distributed Computing Theory}.
\newblock Morgan and Claypool, 2010.

\bibitem{Her91}
M.~Herlihy.
\newblock Wait-free synchronization.
\newblock {\em ACM Trans. Prog. Lang. Syst.}, 13(1):123--149, 1991.

\bibitem{HLM+03}
M.~Herlihy, V.~Luchangco, M.~Moir, and W.~N. Scherer, III.
\newblock Software transactional memory for dynamic-sized data structures.
\newblock In {\em Proceedings of the Twenty-second Annual Symposium on
  Principles of Distributed Computing}, PODC '03, pages 92--101, New York, NY,
  USA, 2003. ACM.

\bibitem{lee-thesis}
L.~Hyonho.
\newblock \href{http://www.cs.toronto.edu/pub/hlee/thesis.ps} {Local-spin
  mutual exclusion algorithms on the DSM model using fetch-and-store objects}.
\newblock 2003.

\bibitem{israeli-disjoint}
A.~Israeli and L.~Rappoport.
\newblock Disjoint-access-parallel implementations of strong shared memory
  primitives.
\newblock In {\em PODC}, pages 151--160, 1994.

\bibitem{KR11-TR}
P.~Kuznetsov and S.~Ravi.
\newblock On the cost of concurrency in transactional memory.
\newblock {\em CoRR}, abs/1103.1302, 2011.

\bibitem{TM-WF14}
P.~Kuznetsov and S.~Ravi.
\newblock On partial wait-freedom in transactional memory.
\newblock {\em CoRR}, abs/1407.6876, 2014.

\bibitem{Pap79-serial}
C.~H. Papadimitriou.
\newblock The serializability of concurrent database updates.
\newblock {\em J. ACM}, 26:631--653, 1979.

\bibitem{PFK10}
D.~Perelman, R.~Fan, and I.~Keidar.
\newblock On maintaining multiple versions in {STM}.
\newblock In {\em PODC}, pages 16--25, 2010.

\bibitem{nztm}
F.~Tabba, M.~Moir, J.~R. Goodman, A.~W. Hay, and C.~Wang.
\newblock Nztm: Nonblocking zero-indirection transactional memory.
\newblock In {\em Proceedings of the Twenty-first Annual Symposium on
  Parallelism in Algorithms and Architectures}, SPAA '09, pages 204--213, New
  York, NY, USA, 2009. ACM.

\end{thebibliography}
\end{document}